\DeclareMathOperator{\reg}{reg} 
\DeclareMathOperator{\Spin}{Spin}
\theoremstyle{plain}
\newtheorem{thm}{Theorem}[section]
\newtheorem{lem}[thm]{Lemma}
\newtheorem{pro}[thm]{Proposition}
\theoremstyle{definition}
\theoremstyle{remark}
\newtheorem{egs}[thm]{Example}
\newtheorem{rem}[thm]{Remark}
\newcommand{\CC}{{\mathbb C}}
\newcommand{\RR}{{\mathbb R}}
\newcommand{\ZZ}{{\mathbb Z}}
\newcommand{\dt}{d_{S^1}}
\newcommand{\set}[1]{\left\{ #1 \right\}}
\newcommand{\dfrac}{\displaystyle\frac}
\newcommand{\Hy}{H^{\bullet}}
\begin{document}

\title[The $\hat{\Gamma}$-genus]{The $\hat{\Gamma}$-genus and a regularization of an $S^1$-equivariant Euler class}
\author{Rongmin Lu}
\address{School of Mathematical Sciences, University of Adelaide, South Australia 5005, Australia}
\ead{rongmin.lu@adelaide.edu.au}
\date{\today}
\begin{abstract}
We show that a new multiplicative genus, in the sense of Hirzebruch, can be obtained by generalizing a calculation due to Atiyah and Witten. We introduce this as the $\hat{\Gamma}$-genus, compute its value for some examples and highlight some of its interesting properties. We also indicate a connection with the study of multiple zeta values, which gives an algebraic interpretation for our proposed regularization procedure.
\end{abstract}
\pacs{02.40.Vh, 02.30.Lt}
\ams{57R20, 40A20, 55P35}

\section{Introduction}

The Atiyah--Singer index theorem \cite{AtiyahSinger68} was a cornerstone of 20th century mathematics, setting off an interplay of ideas between mathematics and physics that has continued till this day. Among the many subsequent re-derivations of this theorem was a formal demonstration by Atiyah \cite{Atiyah85}, following an idea of Witten \cite{Witten85}, of the equivalent result that the $\hat{A}$-genus of a spin manifold $M$ can be recovered as a regularized $S^1$-equivariant Euler characteristic of the normal bundle of $M$ in its free loop space $LM$.

In this note, we extend the regularization of Atiyah and Witten to the case of an arbitrary complex vector bundle $\pi\,\colon E\to M$ of rank $m\geq 2$ with spin structure (here, $M$ need not be spin). We find that we derive a new multiplicative genus, in the sense of Hirzebruch \cite{Hirzebruch95}, and we introduce this as the $\hat{\Gamma}$-genus (see Proposition \ref{pro:invEuler}).

The $\hat{\Gamma}$-genus has several interesting properties: for example, it vanishes on all Riemann surfaces (see Proposition \ref{pro:gamma-hat}). It is the second multiplicative genus that incorporates the $\Gamma$ function into its generating function: the first one, to the best of our knowledge, was the $\Gamma$-genus coming from mirror symmetry \cite{Libgober99,HosonoKTY95a}. Furthermore, the $\hat{\Gamma}$-function turns out to play an important role in the study of multiple zeta values \cite{Cartier02,IharaKanekoZagier06}, so the results here may be of independent interest.

The plan of this note is as follows. In section 2, we recall Hirzebruch's theory of multiplicative genera and the theory of equivariant de Rham cohomology. We develop our proposal for extending the Atiyah--Witten regularization in section 3, and re-interpret this algebraically in section 4, using a formalism of Hoffman \cite{Hoffman97,Hoffman01}. In section 5, we compute the $\hat{\Gamma}$-genus for some manifolds and describe some of its properties. We conclude with a discussion in section 6.

Throughout this note, we shall assume that $M$ is a compact, connected, simply connected, oriented, smooth and finite-dimensional manifold, unless otherwise stated. In particular, this means $LM$ is assumed to be connected. We also assume that $LM$ is endowed with a topology that makes it an infinite-dimensional smooth Fr\'echet manifold.

\section{Preliminaries}

We devote this section to a rapid review of Hirzebruch's theory of multiplicative genera \cite{Hirzebruch95} and equivariant de Rham cohomology. The reader who is interested in more details about the latter theory may refer to the work of Guillemin and Sternberg \cite{GuilleminSternberg99}.

\subsection{Multiplicative Genera}

In Hirzebruch's theory, a multiplicative genus is generated by a \emph{multiplicative sequence} of polynomials $\{K_n(c_1,\ldots,c_n)\}$, where the $c_i$'s are the Chern classes of an almost complex manifold $M^{2n}$. This sequence is given by the polynomial coefficients in the series
\[K\left(\sum_{n=0}^{\infty} c_n t^n\right) = \sum_{n=0}^{\infty} K_n(c_1,\ldots,c_n) t^n,\] where $c_0 = K_0 = 1$ by convention. The multiplicative operator $K$ on the left-hand side is defined by a \emph{generating function}
 \[\phi(t) = 1 + \sum_{n=1}^{\infty} a_n t^n.\] In particular, we set $K_n(1,0,\ldots,0) = a_n$, so that $K(1+t) = \phi(t)$.

\begin{rem}
We shall preserve the traditional abuse of notation by writing $\{\phi_n(c_1,\ldots,c_n)\}$ for the polynomial $\{K_n(c_1,\ldots,c_n)\}$. This is justified by the observation that there is a one-to-one correspondence between formal power series having constant term 1 and multiplicative sequences \cite[Lemmata 1.1 and 1.2]{Hirzebruch95}. 
\end{rem}

We can then associate to a multiplicative sequence $\set{\phi_n}$ a \emph{(multiplicative) genus}, which we shall call the \emph{$\phi$-genus}. This is defined for an almost complex $2n$-manifold $M^{2n}$ by
\[\phi(M^{2n}):= \langle \phi_n(c_1,\ldots,c_n),[M^{2n}]\rangle,\] where the $c_i$'s are the Chern classes of $M^{2n}$ and $[M^{2n}]$ is the fundamental class of $M^{2n}$. The $\phi$-genus is multiplicative in the following sense (cf. \cite[Lemma 10.2.1]{Hirzebruch95}):

\begin{lem}\label{lem:genusmult}
Let $M$ and $N$ be two almost complex manifolds, and $M\times N$ be the product manifold endowed with the product almost complex structure. Then every multiplicative sequence $\set{\phi_n}$ defines a multiplicative $\phi$-genus, in the sense that \[\phi(M\times N) = \phi(M)\phi(N).\]
\end{lem}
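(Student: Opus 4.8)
The plan is to reduce the statement to three standard facts: the behaviour of Chern classes under products, the defining multiplicative property of the sequence $\set{\phi_n}$, and the compatibility of the Kronecker pairing with cohomology cross products. Write $\dim_{\CC} M = p$ and $\dim_{\CC} N = q$, and for a complex bundle $V$ let $\bar\phi(V) := \sum_{n\geq 0}\phi_n(c_1(V),\ldots,c_n(V))$ denote the total $\phi$-class, so that $\phi(M^{2n})$ is obtained by evaluating the degree-$2n$ component of $\bar\phi(TM)$ on the fundamental class $[M^{2n}]$.

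First I would record that the tangent bundle of the product splits as $T(M\times N)\cong \pi_M^{*}TM\oplus\pi_N^{*}TN$, where $\pi_M,\pi_N$ are the two projections; this is precisely what the product almost complex structure encodes. By the splitting principle I may write the total Chern classes formally as $c(TM)=\prod_{i=1}^{p}(1+x_i)$ and $c(TN)=\prod_{j=1}^{q}(1+y_j)$, so that the Chern roots of $T(M\times N)$ are the classes $\pi_M^{*}x_i$ together with the classes $\pi_N^{*}y_j$.

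The key step is the multiplicativity of the total $\phi$-class. By the very definition of a multiplicative sequence, the operator $K$ carries a product of factors $(1+x)$ to the product of the corresponding values $\phi(x)$, so that $\bar\phi(V)=\prod_i\phi(x_i)$ in terms of the Chern roots; applying this to the three bundles above gives $\bar\phi(T(M\times N))=\pi_M^{*}\bar\phi(TM)\cdot\pi_N^{*}\bar\phi(TN)$. Expanding the right-hand side by the K\"unneth formula yields $\sum_{a,b}\phi_a(TM)\times\phi_b(TN)$.

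Finally I would extract the top-degree component and pair it against $[M\times N]=[M]\times[N]$, using $\langle\alpha\times\beta,[M]\times[N]\rangle=\langle\alpha,[M]\rangle\,\langle\beta,[N]\rangle$, which immediately produces $\phi(M)\phi(N)$ once the correct cross term is isolated. This isolation is where I expect the only real care to be needed: among all K\"unneth summands $\phi_a(TM)\times\phi_b(TN)$ contributing to total degree $2(p+q)$, the factor $\phi_a(TM)$ has degree $2a$ and pairs to zero against $[M]$ unless $a=p$, and likewise $\phi_b(TN)$ forces $b=q$, so only the balanced term $a=p$, $b=q$ survives. The main obstacle is thus the degree bookkeeping in this last step rather than any substantial new computation, since the multiplicativity of $K$ may be invoked directly from Hirzebruch's theory.
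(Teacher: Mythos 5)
Your proof is correct and is essentially the argument the paper relies on: the paper gives no proof of its own, deferring to Hirzebruch's Lemma 10.2.1, and your chain of reasoning --- the splitting $T(M\times N)\cong \pi_M^{*}TM\oplus\pi_N^{*}TN$, the defining multiplicativity $K(ab)=K(a)K(b)$ applied via the Whitney sum formula, and the K\"unneth/Kronecker degree bookkeeping isolating the term $a=p$, $b=q$ --- is exactly that standard argument. No gaps; the only point worth noting is that the off-balance terms with $a>p$ already vanish as cohomology classes (degree exceeds $\dim M$), so the pairing step is even more immediate than you suggest.
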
 

\begin{egs}
The \emph{$\hat{A}$-genus} of a $4n$-manifold $M$, which gives the index of the Dirac operator defined on $M$ if $M$ is spin, is generated by the $\hat{A}$-function 
\[\hat{A}(z) = \dfrac{z/2}{\sinh (z/2)}.\]
\end{egs}

\begin{egs}\label{egs:gammagen}
Let $\hat{\Gamma}(z):= e^{\gamma z}\Gamma(1+z)$ be the \emph{$\hat{\Gamma}$-function}. Recall that $1/\Gamma(1+z)$ and $1/\hat{\Gamma}(z)$ are both entire functions having power series representations with 1 as the constant term, so by Hirzebruch's theory, they also generate multiplicative genera. We shall refer to the multiplicative genus generated by $1/\hat{\Gamma}(z)$ (resp. $1/\Gamma(1+z)$) as the \emph{$\hat{\Gamma}$-genus} (respectively, the \emph{$\Gamma$-genus}, following Libgober \cite{Libgober99}).
\end{egs}

\begin{rem}
Under our standing assumptions for manifolds, we observe, following Hirzebruch \cite[p. 76]{Hirzebruch95}, that $\phi(M^{2n})$ is determined by $\phi_n(c_1,\ldots,c_n)$. Thus, in section \ref{sec:Gamma}, we shall give the value of $\hat{\Gamma}_n$, which should now be regarded as a polynomial in Chern numbers, where the $\hat{\Gamma}$-genus of $M^{2n}$ is intended.
\end{rem}

\subsection{Equivariant Cohomology and Localization}

Equivariant cohomology is usually defined with respect to the action of a group (the circle $S^1$ throughout this note). However, as we are working with differential forms, we shall use an infinitesimal model that Atiyah and Bott developed in \cite{AtiyahBott84} (though it was already implicit in the work of H. Cartan \cite{Cartan51b,Cartan51a}). Following Atiyah and Bott, we shall work with the complex numbers as our base field. 

Let $M$ be a manifold with an $S^1$-action generated by the fundamental vector field $X$. The \emph{ordinary $S^1$-equivariant (de Rham) cohomology} of $V$ is then defined to be
\[\Hy_{S^1}(M) := \Hy(\Omega_{S^1}(M)[u], d_{S^1}).\] The complex is the graded ring of polynomials in an indeterminate $u$ of degree 2 with coefficients in the $S^1$-invariant differential forms of $M$, while $d_{S^1}:= d + u\iota_X$ is the equivariant differential and $\iota_X$ is contraction with $X$.

One feature of equivariant cohomology is that it satisfies a localization theorem. This is stated in terms of the \emph{localized $S^1$-equivariant cohomology} of a manifold $M$, which is given by
\[u^{-1}\Hy_{S^1}(M):= \Hy(\Omega_{S^1}(M)[u,u^{-1}], d_{S^1}).\] 

\begin{thm}\label{thm:localize}\emph{(cf. \cite{AtiyahBott84})}
The inclusion $i\colon F\hookrightarrow M$ in $M$ of the fixed point set $F$ of the $S^1$-action induces an isomorphism $i^*\colon u^{-1}\Hy_{S^1}(V) \to u^{-1}\Hy_{S^1}(M)$ on localized $S^1$-equivariant cohomology. Since the $S^1$-action on $F$ is trivial
\[u^{-1}\Hy_{S^1}(F)\cong \Hy(F) \otimes \CC[u,u^{-1}],\] where $\Hy(F)$ is the ordinary cohomology of $F$.
\end{thm}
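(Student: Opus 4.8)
The plan is to exhibit $i^{*}$ as an isomorphism by showing that, upon inverting $u$, everything away from the fixed-point set $F$ is annihilated, so that the localized equivariant cohomology concentrates on $F$. First I would observe that the restriction $i^{*}$ is a morphism of $\CC[u]$-modules commuting with $\dt$ (restriction of forms commutes with both $d$ and $\iota_X$), so it extends to a map $i^{*}\colon u^{-1}\Hy_{S^1}(M)\to u^{-1}\Hy_{S^1}(F)$ after tensoring with $\CC[u,u^{-1}]$ over $\CC[u]$; since $\CC[u,u^{-1}]$ is a localization, hence a flat $\CC[u]$-module, this tensoring is exact and commutes with passing to $\dt$-cohomology. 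The identification of the target is then immediate: on $F$ the action is trivial, so $X$ vanishes there, $\iota_X=0$, and $\dt=d+u\iota_X$ reduces to the ordinary de Rham differential $d$; every form on $F$ is automatically invariant, whence $\Omega_{S^1}(F)[u,u^{-1}]=\Omega(F)\otimes\CC[u,u^{-1}]$ with differential $d$, and taking cohomology gives $u^{-1}\Hy_{S^1}(F)\cong\Hy(F)\otimes\CC[u,u^{-1}]$.

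The heart of the argument is a vanishing lemma: if $S^1$ acts on a manifold $N$ with no fixed points, i.e.\ locally freely, then $u^{-1}\Hy_{S^1}(N)=0$. To prove this I would fix an $S^1$-invariant Riemannian metric and set $\theta:=\langle X,-\rangle/\abs{X}^{2}$, an invariant angular $1$-form with $\iota_X\theta=1$ and $L_X\theta=0$; this is where local freeness is essential, since it forces $X$ to be nowhere zero, so $\abs{X}^{2}>0$ and $\theta$ is globally defined. Then $\dt\theta=d\theta+u\iota_X\theta=u+d\theta$, which in the localized complex factors as $u\bigl(1+u^{-1}d\theta\bigr)$. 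Because $d\theta$ has positive form-degree, $u^{-1}d\theta$ is nilpotent on the finite-dimensional $N$, so $1+u^{-1}d\theta$ is invertible via a terminating geometric series; hence $\beta:=\dt\theta$ is an invertible, $\dt$-closed element of $\Omega_{S^1}(N)[u,u^{-1}]$. For any $\dt$-closed $\alpha$ the element $\beta^{-1}\alpha$ is again closed, and since $\theta$ is odd while $\dt(\beta^{-1}\alpha)=0$, a direct computation gives $\dt\!\left(\theta\,\beta^{-1}\alpha\right)=\beta\cdot\beta^{-1}\alpha=\alpha$, so $\alpha$ is exact and the localized cohomology of $N$ vanishes.

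With the lemma in hand I would glue via an equivariant Mayer--Vietoris sequence. Choose an $S^1$-invariant open tubular neighbourhood $U$ of $F$, which retracts $S^1$-equivariantly onto $F$, and let $W:=M\setminus F$; then $M=U\cup W$, and $S^1$ acts locally freely on both $W$ and $U\cap W=U\setminus F$. The Mayer--Vietoris sequence in equivariant de Rham cohomology, built from an invariant partition of unity subordinate to $\set{U,W}$, stays exact after inverting $u$, and by the vanishing lemma the localized terms for $W$ and $U\cap W$ drop out, leaving $u^{-1}\Hy_{S^1}(M)\cong u^{-1}\Hy_{S^1}(U)\cong u^{-1}\Hy_{S^1}(F)$, the last isomorphism coming from the invariant retraction; tracing the maps identifies this composite with $i^{*}$.

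I expect the main obstacle to be the vanishing lemma, and specifically the construction and use of the invariant angular form: one must verify that absence of fixed points is exactly what makes $\abs{X}^{2}$ nowhere vanishing, so that $\theta$ is well defined, and one must justify that $1+u^{-1}d\theta$ is genuinely invertible in the localized complex, which relies on the finite-dimensionality of $M$ to truncate the geometric series. The subsidiary technical point is setting up the equivariant Mayer--Vietoris sequence in the Cartan model and checking that localization, being exact, preserves its exactness --- routine but essential for the gluing.
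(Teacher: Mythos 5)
The paper offers no proof of this theorem at all --- it is quoted with a ``cf.'' citation to Atiyah and Bott \cite{AtiyahBott84} --- so there is no internal argument to compare yours against; what you have written is a self-contained proof of a result the paper treats as known. Judged on its own, your proof is correct, and it is essentially the standard localization argument for circle actions in the Cartan model (close to the treatment in Atiyah--Bott and in Berline--Getzler--Vergne). All three ingredients hold up: flatness of $\CC[u,u^{-1}]$ over $\CC[u]$ makes inverting $u$ commute with passing to $d_{S^1}$-cohomology; on $F$ the fundamental vector field $X$ vanishes, so $d_{S^1}$ reduces to $d$ and triviality of the action makes every form invariant, giving $u^{-1}H^{\bullet}_{S^1}(F)\cong H^{\bullet}(F)\otimes\CC[u,u^{-1}]$; and the vanishing lemma, which you rightly identify as the crux, is argued correctly --- for an $S^1$-action, absence of fixed points forces all isotropy groups to be finite and $X$ to be nowhere zero, so the invariant angular form $\theta=\langle X,\cdot\rangle/|X|^2$ exists, and $d_{S^1}\theta=u+d\theta$ is invertible in the Laurent complex because $d\theta$ is nilpotent on a finite-dimensional manifold, yielding the contraction $\alpha=d_{S^1}\bigl(\theta\,(d_{S^1}\theta)^{-1}\alpha\bigr)$ for closed $\alpha$ (the signs work because $\theta$ is odd while $d_{S^1}\theta$ is even, hence central). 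The Mayer--Vietoris gluing with an invariant tubular neighbourhood, an averaged partition of unity, and equivariant homotopy invariance for the retraction onto $F$ is routine, as you note. One editorial point: the statement as printed contains typos --- the map should be $i^*\colon u^{-1}H^{\bullet}_{S^1}(M)\to u^{-1}H^{\bullet}_{S^1}(F)$, since $i^*$ is restriction (the ``$V$'' is spurious and the arrow is reversed) --- and your proof correctly establishes this corrected statement, which is the one the paper actually uses later.
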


A consequence of this is the integration formula of Duistermaat and Heckman \cite{DuistermaatHeckman82,DuistermaatHeckman83}. This was independently derived by Berline and Vergne \cite{BerlineVergne83}, who also realized that the equivariant Euler class appears in the formula. The following theorem summarizes the results about the integration formula that we shall need in this note (cf. also \cite{BerlineGetzlerVergne92}).

\begin{thm}\label{thm:locformula}
Let $M$ be a manifold with an action of the circle $S^1$. Let $X$ be the fundamental vector field generating the $S^1$-action on $M$, $F$ be the fixed point set of the $S^1$-action with inclusion $i\colon F\hookrightarrow M$, and $\nu_F$ be the normal bundle of $F$ in $M$ such that $\nu_F$ and $F$ have compatible orientations. Let $L_{\nu_F}$ be the skew-adjoint endomorphism on $\nu_F$ induced by the $S^1$-action generated by $X$ and $R_{\nu_F}$ be the curvature of the $S^1$-invariant metric connection on $\nu_F$ induced from the Riemannian connection on $M$. Then, for a form $\alpha \in \Omega_{S^1}(M)$ that is closed under $d_{S^1}$,
\begin{equation}\label{eqn:locformula}
\int_M \alpha = \int_F  i^*(\alpha)\left[\det\left(\dfrac{L_{\nu_F}+R_{\nu_F}}{2\pi i}\right)\right]^{-1},
\end{equation} where $L_{\nu_F}$ and $R_{\nu_F}$ are considered to be complex endomorphisms when taking determinants. Furthermore, the denominator is the equivariant Euler class $e(\nu_F)$ of the normal bundle $\nu_F$.
\end{thm}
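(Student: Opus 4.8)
The plan is to prove the formula by an explicit transgression argument in the spirit of Berline and Vergne, using Theorem \ref{thm:localize} as the conceptual guarantee that the integral must localize to the fixed point set $F$. First I would fix an $S^1$-invariant Riemannian metric $g$ on $M$ (obtained by averaging any metric over $S^1$) and form the invariant $1$-form $\lambda := g(X,\cdot)$. Since $X$ vanishes precisely on $F$, the equivariant form $d_{S^1}\lambda = d\lambda + u\,|X|^2$ has invertible degree-zero component $u|X|^2$ on $M\setminus F$, so it becomes a unit in $\Omega_{S^1}(M\setminus F)[u,u^{-1}]$. Setting $\mu := \lambda\,(d_{S^1}\lambda)^{-1}$, a short computation gives $d_{S^1}\mu = 1$ away from $F$, and since $\alpha$ is $d_{S^1}$-closed this yields $\alpha = d_{S^1}(\mu\wedge\alpha)$ on $M\setminus F$.

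Next I would excise a small equivariant tubular neighbourhood $D_\epsilon$ of $F$ (a disc bundle of radius $\epsilon$ in $\nu_F$) and apply Stokes' theorem to the top-degree part of the equivariant form. Because $\alpha$ is exact on the complement, $\int_{M\setminus D_\epsilon}\alpha = -\int_{\partial D_\epsilon}\mu\wedge\alpha$ up to orientation, and as $\epsilon\to 0$ the left-hand side converges to $\int_M\alpha$. It then remains to evaluate $\lim_{\epsilon\to 0}\int_{\partial D_\epsilon}\mu\wedge\alpha$, which by the equivariant tubular neighbourhood theorem may be computed on the sphere bundle of $\nu_F$. Here I would pull everything back to $\nu_F$, decompose $\nu_F$ into the two-dimensional weight spaces on which $S^1$ acts by rotation, and carry out the fibrewise integration.

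The main obstacle is this last fibre integration and its identification with the equivariant Euler class. The linearized action contributes the skew-adjoint endomorphism $L_{\nu_F}$ through the rotation weights, while the induced invariant connection contributes the curvature $R_{\nu_F}$; combining them via the Chern--Weil representative of the equivariant Euler class, the fibre integral collapses to $i^*(\alpha)$ times the inverse of $\det\bigl((L_{\nu_F}+R_{\nu_F})/2\pi i\bigr)$, with the $2\pi i$ normalization emerging from the Gaussian integrals over the fibres. This simultaneously establishes the integration formula in equation~(\ref{eqn:locformula}) and the final assertion that the denominator is precisely $e(\nu_F)$, since that determinant is by definition the equivariant Euler class of $\nu_F$ in the localized theory. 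As a consistency check, I would observe that the algebraic projection formula $i_* i^*\beta = \beta\cup e(\nu_F)$ together with Theorem \ref{thm:localize} gives $\alpha = i_*\bigl(i^*\alpha/e(\nu_F)\bigr)$ on localized cohomology, recovering the same identity upon integration.
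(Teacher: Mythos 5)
The paper does not actually prove this theorem: it is stated as a summary of known results, with the proof residing entirely in the cited literature (Duistermaat--Heckman, Berline--Vergne, and the book of Berline--Getzler--Vergne). What you have written is, in outline, exactly the standard proof from those references: the transgression form $\mu=\lambda\,(d_{S^1}\lambda)^{-1}$, whose existence off $F$ uses the invertibility of $u|X|^2+d\lambda$ in the localized complex and which satisfies $d_{S^1}\mu=1$ there; the identity $\alpha=d_{S^1}(\mu\wedge\alpha)$ on $M\setminus F$; Stokes' theorem on the complement of an $\epsilon$-tube around $F$ (legitimate because $\int_M\iota_X\beta=0$ for degree reasons, so only the $d$ part of $d_{S^1}$ contributes under the integral); and the $\epsilon\to 0$ evaluation of the boundary term by fibrewise integration over $\nu_F$, where the rotation weights produce $L_{\nu_F}$ and the induced invariant connection produces $R_{\nu_F}$. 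So your route is the right one, and it buys an actual argument where the paper simply quotes. Two caveats. First, the boundary-term evaluation, which you yourself flag as the main obstacle, is asserted rather than executed; it is the technical heart of the theorem (several pages in Berline--Getzler--Vergne) and a complete write-up would have to carry out the Gaussian computation on the decomposition of $\nu_F$ into two-dimensional weight subbundles. Second, your closing consistency check misstates the projection formula: the correct statement is $i^*i_*\gamma=\gamma\cup e(\nu_F)$ for a class $\gamma$ on $F$ (pushforward followed by pullback), not $i_*i^*\beta=\beta\cup e(\nu_F)$; combined with Theorem \ref{thm:localize} it still yields $\alpha=i_*\bigl(i^*\alpha/e(\nu_F)\bigr)$ in localized cohomology, which is the identity you actually use. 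Neither point invalidates the approach. Note also that your construction naturally produces $\det\bigl((uL_{\nu_F}+R_{\nu_F})/2\pi i\bigr)$, and the formula \eref{eqn:locformula} as printed is its specialization at $u=1$, consistent with the paper's own conventions in Proposition \ref{lem:tauform}.
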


Finally, we give a construction of an equivariant differential form that represents the class $e(\nu_F)$. This is due to Jones and Petrack \cite{JonesPetrack90}.

\begin{pro}\label{lem:tauform}
With the same hypotheses as in Theorem \ref{thm:locformula}, let $\alpha$ be the differential form dual to $X$. Let $\tau\in\Omega_{S^1}(M)[u,u^{-1}]$ be the $S^1$-equivariant form given by 
\begin{equation}\label{eqn:JPtau}
\tau:= e^{-\dt \alpha},
\end{equation} $\pi\colon M\to F$ be the projection from $M$ to its fixed point set $F$, and $\pi_*\colon\Omega_{S^1}(M)[u,u^{-1}]\to \Omega(F)[u,u^{-1}]$ be integration along the fibers of $\pi$. Then,
\begin{equation}\label{eqn:inverseEuler}
\pi_*(\tau)=\left[\det\left(\dfrac{uL_{\nu_F}+R_{\nu_F}}{2\pi i}\right)\right]^{-1}.
\end{equation}
\end{pro}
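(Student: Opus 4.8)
The plan is to turn $\pi_*(\tau)$ into a fibrewise Gaussian integral that localizes onto $F$. First I would expand the equivariant differential using $\dt=d+u\iota_X$ together with the fact that $\alpha$ is the metric dual of $X$, so that $\iota_X\alpha=\alpha(X)=g(X,X)=\abs{X}^2$; this gives
\[
\dt\alpha=d\alpha+u\,\abs{X}^2,\qquad \tau=e^{-\dt\alpha}=e^{-u\abs{X}^2}\,e^{-d\alpha}.
\]
Because $\alpha$ is $S^1$-invariant one has $\dt^2\alpha=u\mathcal{L}_X\alpha=0$, whence $\dt\tau=0$ and $\pi_*(\tau)$ is a well-defined $\dt$-closed form on $F$. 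The Gaussian factor $e^{-u\abs{X}^2}$ concentrates the integrand on the zero set of $X$, which is exactly the fixed point set $F$. Using a tubular neighbourhood of $F$ together with the $S^1$-invariant identification of that neighbourhood with the normal bundle $\nu_F$, I would replace fibre integration over $M\to F$ by fibre integration over $\nu_F\to F$, on which $X$ is, to leading order, the linear vector field $v\mapsto L_{\nu_F}v$ in the fibre coordinate $v$.

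Next I would reduce to a rank-two block by multiplicativity. Since $L_{\nu_F}$ is skew-adjoint, $\nu_F$ decomposes $S^1$-equivariantly and orthogonally into oriented rank-two subbundles, on each of which $S^1$ acts by rotation with a single real weight; this splitting can be arranged to be parallel for the invariant connection, so that $R_{\nu_F}$ respects it as well. Both sides of (\ref{eqn:inverseEuler}) are multiplicative across such an orthogonal decomposition: the determinant factors as the product of the block determinants, while the integrand $e^{-u\abs{X}^2}e^{-d\alpha}$ factors as a tensor product over the mutually orthogonal fibre directions, and by Fubini the fibre integral of this product is the product of the individual fibre integrals. It therefore suffices to evaluate $\pi_*(\tau)$ for one block, namely a complex line bundle over $F$ carrying an $S^1$-action of weight $\lambda$ and an invariant connection with curvature two-form $\kappa$.

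The heart of the argument is this single-block Gaussian integral, and it is where I expect the real work to lie. Writing $\alpha=g(L_{\nu_F}v,\nabla v)$ in terms of the fibre coordinate $v$ and the covariant differential $\nabla v$, a direct computation --- differentiating through the connection so that the curvature enters via $\nabla^2v=R_{\nu_F}v$ --- should give, for the block,
\[
\dt\alpha=\lambda\,\abs{v}^2\,(u\lambda+\kappa)+2\lambda\,dx\wedge dy,
\]
where $(x,y)$ are fibre coordinates and $dx\wedge dy$ is the fibre area form. Expanding $e^{-\dt\alpha}$, using that the fibre two-form is nilpotent, and performing the Gaussian moments $\int_{\RR^2}\abs{v}^{2n}e^{-u\lambda^2\abs{v}^2}\,dx\,dy$ resums the $\kappa$-series geometrically to
\[
\pi_*(\tau)=\frac{-2\pi}{u\lambda+\kappa},
\]
which is precisely $\bigl[(uL_{\nu_F}+R_{\nu_F})/2\pi i\bigr]^{-1}$ on the block once $L_{\nu_F}$ and $R_{\nu_F}$ are complexified, the factor $i$ arising because the real rotation generator has complex eigenvalue $i\lambda$. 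The main obstacle is exactly this step: faithfully tracking the curvature contribution produced by the connection (the Mathai--Quillen mechanism, which promotes the scalar $u\lambda$ to $u\lambda+\kappa$) and pinning down the normalization and orientation sign so that the constant comes out as $2\pi i$. Finally, multiplying the block factors reassembles the determinant and yields
\[
\pi_*(\tau)=\left[\det\left(\dfrac{uL_{\nu_F}+R_{\nu_F}}{2\pi i}\right)\right]^{-1},
\]
as claimed.
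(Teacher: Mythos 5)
Your proposal is correct in substance, but it takes a genuinely different route from the paper. The paper's own proof is a short formal argument: since $\alpha$ vanishes on $F$, the pullback satisfies $i^*(\tau)=1$; the equivariant Thom isomorphism $e(\nu_F)\,\pi_*(\beta)=i^*(\beta)$ then gives $\pi_*(\tau)=1/e(\nu_F)$ at once, and the determinant expression follows by quoting Theorem \ref{thm:locformula}, which identifies $e(\nu_F)$ with $\det\bigl((uL_{\nu_F}+R_{\nu_F})/2\pi i\bigr)$. You instead re-derive that identification from scratch by the Berline--Vergne/Mathai--Quillen mechanism: block decomposition of $\nu_F$, the fibrewise Gaussian integral, and geometric resummation of the curvature series. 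Your single-block computation is right --- indeed $\dt\alpha=\lambda\abs{v}^2(u\lambda+\kappa)+2\lambda\,dx\wedge dy$ and the Gaussian moments resum to $\pm 2\pi/(u\lambda+\kappa)$, matching the inverse of the complex determinant on the block --- so your argument buys self-containedness and makes explicit how the connection feeds the curvature into the answer, which the paper's proof hides inside two quoted theorems. The price is that you must handle three points the paper's formal argument never meets: (i) the global decomposition of $\nu_F$ into rank-two subbundles that are simultaneously invariant under $L_{\nu_F}$ and parallel for the connection is not automatic (the weight-space splitting is parallel, but splitting a higher-rank weight space into parallel line bundles needs the splitting principle or a local/naturality argument); (ii) the Gaussian integral converges only after giving $u$ a positive real value or working in a suitable completion of $\Omega_{S^1}(M)[u,u^{-1}]$, a point Jones and Petrack address by completing the coefficient ring; and (iii) the residual sign in your block answer $-2\pi/(u\lambda+\kappa)$ versus $+2\pi/(u\lambda+\kappa)$ must be resolved by fixing the orientation of the fibre relative to the complex structure determined by $L_{\nu_F}$, exactly the normalization issue you flag. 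None of these is a fatal gap, but in the paper's approach they are absorbed into the statements of the Thom isomorphism and the localization formula rather than proved by hand.
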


\begin{rem}
It is interesting to observe that $\tau$ is a factor in the Mathai--Quillen universal Thom form. The reader is invited to compare \eref{eqn:JPtau} with formula (6.9) of \cite{MathaiQuillen86}.
\end{rem}

\begin{proof}
By construction, $\tau$ is a form closed under $d_{S^1}$. We note that, since $\alpha$ vanishes on $F$, $\tau$ satisfies the identity $i^*(\tau)=1$, where $i^*(\tau)$ is the pullback of $\tau$ by the inclusion of the fixed point set $F$ in $M$. To see that \eref{eqn:inverseEuler} holds, recall that the equivariant Thom isomorphism states that, for an equivariant form $\beta\in\Omega_{S^1}(M)[u,u^{-1}]$,
\[e(\nu_F)\pi_*(\beta) = i^*(\beta),\] where $e(\nu_F)$ is the equivariant Euler class of the normal bundle $\nu_F$ of $F$ in $M$. Since $i^*(\tau)=1$, it follows that 
\[\pi_*(\tau) = \frac{1}{e(\nu_F)}.\]
Formula \eref{eqn:inverseEuler} is then an immediate consequence of Theorem \ref{thm:locformula}.
\end{proof}

\section{Derivation of the $\hat{\Gamma}$-genus}\label{sec:derivation}

In this section, we derive the $\hat{\Gamma}$-genus. This results from an application of our proposed regularization procedure to a complex vector bundle $\pi\,\colon E\to M$ with spin structure. We also show that our regularization procedure reduces to the Atiyah--Witten regularization when $E=TM\otimes \CC$ (see Proposition \ref{pro:realcase}). Our point of departure is the paper by Jones and Petrack \cite{JonesPetrack90}, but we take a slightly different approach and work in a broader setting.

We start with a rank $m$ complex vector bundle $\pi\,\colon E \to M$, for $m\geq 2$. Note that we do not require $M$ to be a spin manifold. We endow the vector bundle with a spin structure, a smooth $S^1$-action and a $S^1$-invariant metric. Taking loops gives a \emph{rank $m$ loop bundle} (in the sense of Cohen--Stacey \cite{CohenStacey04}) $\pi_{\ell}\,\colon LE \to LM$ over $LM$, with inclusions $j\colon E\hookrightarrow LE$ and $i\colon M\hookrightarrow LM$, and $LU(m)$ as its structural group. We can then construct, by analogy with the normal bundle construction for $TM$, the \emph{$E$-normal bundle} $\nu(E)\to M$. This is defined to be $\nu(E):= i^*(LE)/E$. 

We now analyze the structure of $\nu(E)$ in more detail. Note that $\nu(E)$ inherits a complex vector bundle structure and has a Fourier decomposition
\[\nu(E)  = \bigoplus_{n=1}^{\infty} E_n,\] 
where each of the $E_n$ is a copy of $E$ with an $S^1$-action of weight $n$. There are finite-dimensional subbundles \[\nu_k(E)  = \bigoplus_{n=1}^{k} E_n\] 
with inclusions $j_k\colon \nu_k(E) \hookrightarrow \nu(E)$ into $\nu(E)$ and projections $\pi_k\colon \nu_k(E)\to M$ onto $M$. 

Within this setup, let $\tau_k$ denote the $S^1$-equivariant form on $\nu_k(E)$ as constructed in Proposition \ref{lem:tauform}. The base manifold $M$ is now the fixed point set of the $S^1$-action on $\nu_k(E)$, so we can apply Proposition \ref{lem:tauform} to see that

\begin{lem}
The equivariant cohomology class
\begin{equation}\label{eqn:tauform}
(\pi_k)_{_*}(\tau_k) = \left[\prod\limits_{n=1}^k \det\left(\dfrac{nuL_E+R_E}{2\pi i}\right)\right]^{-1}
\end{equation} is the inverse of the $S^1$-equivariant Euler class of the bundle $\nu_k(E)$. 
\end{lem}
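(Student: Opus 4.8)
The plan is to apply Proposition \ref{lem:tauform} directly to the finite-dimensional bundle $\pi_k\colon \nu_k(E) \to M$, with $M$ playing the role of the fixed point set of the $S^1$-action. Proposition \ref{lem:tauform} then gives at once
\[(\pi_k)_{_*}(\tau_k) = \left[\det\left(\frac{uL_{\nu_k(E)} + R_{\nu_k(E)}}{2\pi i}\right)\right]^{-1},\]
together with the statement that this is the inverse of the equivariant Euler class of $\nu_k(E)$. Hence the entire content of the lemma collapses to verifying that the single determinant on the right factors as $\prod_{n=1}^k \det\left(\frac{nuL_E + R_E}{2\pi i}\right)$.

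First I would exploit the Fourier decomposition $\nu_k(E) = \bigoplus_{n=1}^k E_n$. Since this is an $S^1$-invariant orthogonal direct sum, and the induced metric connection respects the decomposition because each $E_n$ is $S^1$-invariant, both the skew-adjoint endomorphism $L_{\nu_k(E)}$ and the curvature $R_{\nu_k(E)}$ are block-diagonal with respect to the summands. The determinant of a block-diagonal endomorphism is the product of the determinants of its blocks, so it suffices to identify the restriction of $uL + R$ to each summand $E_n$.

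The key step, and the only point requiring care, is to check that on $E_n$ the relevant data reduce to $(nL_E, R_E)$. Each $E_n$ is a copy of $E$ carrying the weight-$n$ circle action, so the infinitesimal generator, and hence the induced skew-adjoint endomorphism, is rescaled by the weight: $L_{E_n} = n L_E$. By contrast, the curvature is unchanged, $R_{E_n} = R_E$, since rescaling the $S^1$-action alters the generator but not the $S^1$-invariant metric connection from which $R$ is computed (a metric invariant under the rotation action remains invariant mode by mode). Substituting these into the block for $E_n$ produces exactly the factor $\det\left(\frac{nuL_E + R_E}{2\pi i}\right)$.

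Assembling the blocks yields the claimed factorization, and the identification of $(\pi_k)_{_*}(\tau_k)$ with the inverse equivariant Euler class is already furnished by Proposition \ref{lem:tauform}. The main obstacle is therefore conceptual rather than computational: one must correctly distinguish the role of the $S^1$-generator, which scales with the weight $n$, from that of the metric connection, which does not, since it is precisely this distinction that produces the $n$-dependence inside the determinant while leaving $R_E$ fixed.
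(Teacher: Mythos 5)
Your proposal is correct and follows essentially the same route as the paper, which states the lemma as a direct consequence of Proposition \ref{lem:tauform} applied to $\nu_k(E)$ with $M$ as the fixed point set. The details you supply --- block-diagonality of $L$ and $R$ over the Fourier summands, the weight scaling $L_{E_n} = nL_E$, and the invariance $R_{E_n} = R_E$ --- are exactly the computation the paper leaves implicit.
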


This points to a possible definition of an $S^1$-equivariant Euler class for $\nu(E)$, but first we have to consider orientability for $\nu(E)$. From physical grounds, Witten \cite{Witten85} has argued that $LM$ is orientable if and only if $M$ is spin. Atiyah \cite{Atiyah85} and Segal \cite{Segal88} have shown that, provided $\pi_1(M)=0$, Witten's statement is true, since the obstruction to $M$ being spin transgresses to the obstruction to $LM$ being orientable. For real vector bundles, McLaughlin \cite{McLaughlin92} has proved the following:
\begin{thm}
Let $\pi_1(M)=0$ and $E\to M$ be a real vector bundle with structural group $SO(n)$, where $n\geq 4$. Then the following conditions are equivalent:
\begin{enumerate}
	\item $E\to M$ is a vector bundle with a spin structure.
	\item The structural group of the real loop bundle $LE\to LM$ can be reduced to $L_0SO(n)$, the connected component of the identity of $LSO(n)$. (This is the condition for the orientability of a loop space \cite{Segal88}.)
	\item The structural group of $LE\to LM$ has a lifting to $L\Spin(n)$.
\end{enumerate}
\end{thm}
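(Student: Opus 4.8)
The plan is to prove the cycle of implications $(1)\Rightarrow(3)\Rightarrow(2)\Rightarrow(1)$, with the transgression of the second Stiefel--Whitney class $w_2(E)$ doing the real work in the last step. The two forward implications are formal consequences of the functoriality of the free-loop construction. For $(1)\Rightarrow(3)$, a spin structure on $E$ is exactly a reduction of its $SO(n)$ frame bundle to $\Spin(n)$; applying the loop functor $L(-)$ to this reduction and to the homomorphism $\Spin(n)\to SO(n)$ yields a reduction of the structure group of $LE$ from $LSO(n)$ to $L\Spin(n)$. For $(3)\Rightarrow(2)$, observe that, since $\Spin(n)$ is connected, a free loop in $SO(n)$ lifts to a loop in $\Spin(n)$ precisely when its class in $\pi_0 LSO(n)\cong\pi_1 SO(n)$ (a group isomorphic to $\ZZ/2$ for $n\geq 4$) is trivial; hence the image of $L\Spin(n)\to LSO(n)$ is the identity component $L_0 SO(n)$, and composing the lift furnished by $(3)$ with this map gives the reduction required by $(2)$.

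The content of the theorem is in $(2)\Rightarrow(1)$, which I would handle by obstruction theory together with a transgression identity. Since $\pi_0 LSO(n)\cong\ZZ/2$, the sole obstruction to reducing the structure group of $LE$ to the identity component $L_0 SO(n)$ is a class $o(LE)\in H^1(LM;\ZZ/2)$, obtained by taking the $\pi_0 LSO(n)$-component of a family of $LSO(n)$-valued transition functions as a \v{C}ech $1$-cocycle with values in $\ZZ/2$; thus $(2)$ holds if and only if $o(LE)=0$. I would then introduce the transgression
\[\tau\colon H^\bullet(M;\ZZ/2)\longrightarrow H^{\bullet-1}(LM;\ZZ/2),\qquad \tau(x)=\mathrm{ev}^*(x)\,/\,[S^1],\]
the slant product of the pullback along the evaluation map $\mathrm{ev}\colon S^1\times LM\to M$, $(\theta,\gamma)\mapsto\gamma(\theta)$, with the fundamental class of $S^1$. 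The key lemma to prove is the identity
\[o(LE)=\tau\bigl(w_2(E)\bigr),\]
established by analysing how the $\pi_0 LSO(n)$-component of the clutching data of the fibre $\Gamma(\gamma^*E)$ varies as $\gamma$ sweeps out a loop in $LM$, and recognizing the resulting $\ZZ/2$-valued holonomy as $w_2(E)$ pulled back by $\mathrm{ev}$ and slanted against $[S^1]$. This is the direct generalization, from $E=TM$ to an arbitrary $SO(n)$-bundle, of the Atiyah--Segal computation that the orientability obstruction $w_1(LM)$ equals $\tau(w_2(M))$.

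Granting this identity, it remains to verify that $\tau$ is injective on $H^2(M;\ZZ/2)$, and this is exactly where the hypothesis $\pi_1(M)=0$ enters. Restricting $\tau(x)$ to the fibre of the evaluation fibration $\Omega M\hookrightarrow LM\to M$ reproduces the cohomology suspension $\sigma\colon H^2(M;\ZZ/2)\to H^1(\Omega M;\ZZ/2)$; for simply connected $M$ the Hurewicz theorem identifies the source with $\mathrm{Hom}(\pi_2 M,\ZZ/2)$ and the target with $\mathrm{Hom}(\pi_1\Omega M,\ZZ/2)=\mathrm{Hom}(\pi_2 M,\ZZ/2)$, and $\sigma$ with the resulting isomorphism, so $\tau$ is injective. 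Consequently $o(LE)=0$ forces $w_2(E)=0$, i.e.\ $E$ carries a spin structure, which is $(2)\Rightarrow(1)$ and closes the cycle. I expect the transgression identity $o(LE)=\tau(w_2(E))$ to be the main obstacle: it demands a careful passage between the frame bundle of the loop bundle $LE$ and the two components of $LSO(n)$, and between the \v{C}ech and evaluation-map descriptions of the relevant $\ZZ/2$-class, whereas the injectivity of $\tau$ and the two functorial implications are comparatively routine.
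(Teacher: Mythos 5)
The first thing to say is that the paper contains no proof of this statement: it is quoted verbatim as McLaughlin's theorem, with a citation to \cite{McLaughlin92}, and is used only as a black box to conclude that $\nu(E)$ is orientable precisely when $E$ is spin. So your proposal can only be measured against the standard (McLaughlin-style) argument, and measured that way your skeleton is the right one: $(1)\Rightarrow(3)$ by looping a $\Spin(n)$-reduction of the frame bundle, $(3)\Rightarrow(2)$ because the image of $L\Spin(n)\to LSO(n)$ is exactly $L_0SO(n)$ (covering-space theory plus $\pi_0 LSO(n)\cong\pi_1 SO(n)\cong\ZZ/2$), and $(2)\Rightarrow(1)$ by combining the identity $o(LE)=\tau(w_2(E))$ with injectivity of the transgression $\tau$ on $H^2(M;\ZZ/2)$ when $\pi_1(M)=0$. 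Your injectivity argument --- restrict to the fibre $\Omega M$ of $LM\to M$, identify the restriction of $\tau$ with the cohomology suspension, and invoke Hurewicz together with the path--loop fibration --- is correct as stated.

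The genuine gap is the key lemma $o(LE)=\tau(w_2(E))$ itself: you state it, correctly flag it as the main obstacle, and offer only a one-sentence sketch, but it is the entire mathematical content of the theorem; everything else in your write-up is formal. Moreover, the route you sketch (tracking the $\pi_0$-component of \v{C}ech clutching data for $LE$ over a cover of $LM$) is genuinely awkward to carry out, because a trivializing cover of $LM$ is not built from loops of chart domains of $M$, so ``the component of the clutching data as $\gamma$ sweeps out a loop'' is hard to control directly. The clean way to fill the gap is to prove the identity universally. Since $L(ESO(n))\to L(BSO(n))$ is a principal $LSO(n)$-bundle with contractible total space, $LBSO(n)$ is a model for $BLSO(n)$. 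In the evaluation fibration $SO(n)\simeq\Omega BSO(n)\to LBSO(n)\to BSO(n)$, restriction to the fibre is injective on $H^1(\,\cdot\,;\ZZ/2)$ because $H^1(BSO(n);\ZZ/2)=0$; the class $\tau(w_2)$ restricts on the fibre to the suspension of $w_2$, which generates $H^1(SO(n);\ZZ/2)\cong\ZZ/2$ since the suspension for the path--loop fibration of $BSO(n)$ is an isomorphism. Meanwhile the universal obstruction class in $H^1(BLSO(n);\ZZ/2)$ is nonzero, because the map $\pi_1(BLSO(n))=\pi_0(LSO(n))\to\pi_1(B\pi_0 LSO(n))$ is the identity. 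Hence the universal obstruction equals $\tau(w_2)$, and naturality of $\tau$ with respect to $f$ and $Lf$, for $f$ a classifying map of $E$, yields $o(LE)=\tau(w_2(E))$ for every bundle. With such an argument supplied, your cycle of implications closes; as written, the crux of the proof is a placeholder.
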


By considering the underlying real bundle of $\pi\,\colon E\to M$, we note that the structural group condition is equivalent to requiring that $E\to M$ has structural group $U(m)\subset SO(2m)$ for $m\geq 2$, i.e. $E\to M$ has to be of rank $m\geq 2$. It then follows that  $i^*(LE)\to M$, and therefore $\nu(E)\to M$, is orientable if and only if $E\to M$ is spin.

It makes sense now to consider the \emph{$S^1$-equivariant Euler class} of the $E$-normal bundle $\nu(E)$, which we define as
\begin{equation}\label{eqn:invEuler}
e(\nu(E)):= \lim_{k\to\infty} \dfrac{1}{(\pi_k)_{_*}(\tau_k)} = \lim_{k\to\infty} \prod\limits_{n=1}^k \det\left(\dfrac{nuL_E+R_E}{2\pi i}\right).
\end{equation}
We show how this class can be written in terms of characteristic classes.

\begin{lem}\label{lem:splitEuler}
The $S^1$-equivariant Euler class of $\nu(E)$ can be re-written as 
\begin{equation}\label{eqn:invEulerprod}
e(\nu(E)) = \lim_{k\to\infty}\prod\limits_{n=1}^k \left(\dfrac{nu}{2\pi}\right)^m\cdot \lim_{k\to\infty} \left[\prod\limits_{n=1}^k \prod\limits_{j=1}^m \left(1+\dfrac{2\pi x_j}{nu}\right)\right].
\end{equation}
\end{lem}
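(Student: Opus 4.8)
The plan is to begin from the defining expression \eref{eqn:invEuler} for $e(\nu(E))$ and to factor each determinant in the product by means of the splitting principle. Since $E\to M$ is a rank $m$ complex vector bundle, I would introduce its Chern roots $x_1,\ldots,x_m$, so that formally the curvature $R_E$ is simultaneously diagonalizable with eigenvalues written as $2\pi i\,x_j$ (with the convention that the total Chern class is $c(E)=\prod_{j=1}^m(1+x_j)$). The endomorphism $L_E$ is the infinitesimal generator of the weight-one $S^1$-action; because this action is implemented through the complex structure on $E$, when $L_E$ is regarded as a complex endomorphism its eigenvalue on each line is $i$, and the factor $n$ in front of $uL_E$ records that $E_n$ carries the action of weight $n$.

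With these identifications, each eigenvalue of the numerator $nuL_E+R_E$ is $i(nu+2\pi x_j)$, so that dividing by $2\pi i$ and taking the determinant gives
\[
\det\!\left(\dfrac{nuL_E+R_E}{2\pi i}\right)=\prod_{j=1}^m\dfrac{nu+2\pi x_j}{2\pi}=\left(\dfrac{nu}{2\pi}\right)^{m}\prod_{j=1}^m\left(1+\dfrac{2\pi x_j}{nu}\right).
\]
Substituting this into \eref{eqn:invEuler} and collecting the scalar prefactors separately from the characteristic-class factors then produces the two products appearing on the right-hand side of \eref{eqn:invEulerprod}. Fixing the orientation and sign conventions so that the eigenvalues are exactly as stated is routine bookkeeping, so I expect no real difficulty in this step.

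The main obstacle is the final rearrangement, namely the passage from a single limit of products to the \emph{product of two separate limits} in \eref{eqn:invEulerprod}. The scalar factor $\prod_{n=1}^k(nu/2\pi)^m$ diverges as $k\to\infty$ (it grows factorially, like $(k!)^m$ up to a power of $u/2\pi$), and the characteristic-class factor $\prod_{n=1}^k\prod_{j=1}^m\bigl(1+2\pi x_j/(nu)\bigr)$ is likewise not convergent in the naive sense. Consequently the equality in \eref{eqn:invEulerprod} cannot be read as an identity of ordinary limits; it must be understood as an identity between the two sides once each is assigned a value by the regularization procedure. At the level of this lemma I would therefore present \eref{eqn:invEulerprod} as the formal factorization licensed by the determinant computation above, deferring the analytic meaning of the two limits — in particular the zeta-type regularization of $\prod_n(nu/2\pi)^m$ and the role of the $\hat{\Gamma}$-function in organizing $\prod_n\prod_j\bigl(1+2\pi x_j/(nu)\bigr)$ — to the regularization developed in the remainder of this note.
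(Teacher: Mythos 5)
Your proposal is correct and follows essentially the same route as the paper's own proof: both rest on the observation that $L_E$ acts as $i$ times the identity and on the splitting-principle factorization $\det\left(I+\frac{R_E}{2\pi i}\right)=\prod_{j=1}^m(1+x_j)$, differing only in that you diagonalize $nuL_E+R_E$ in one step while the paper first factors out $\det\left(\frac{nuL_E}{2\pi i}\right)$ and then expands the remaining determinant in Chern roots. Your closing caveat about the divergence of the two limits matches the paper's own remark immediately following the lemma, so nothing is missing.
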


\begin{proof}
We begin with the observation that the endomorphism $L_E$ is just $i$ times the identity. Thus, we find that we can simplify as follows:
\begin{eqnarray*}
\fl \lim_{k\to\infty} \prod\limits_{n=1}^k \det\left(\dfrac{nuL_E+R_E}{2\pi i}\right) &= \lim_{k\to\infty}\prod\limits_{n=1}^k\det\left(\dfrac{nuL_E}{2\pi i}\right)\det\left(I+\dfrac{L_E^{-1}R_E}{nu}\right)\\
&= \lim_{k\to\infty} \prod\limits_{n=1}^k\left(\dfrac{nu}{2\pi}\right)^m \cdot \lim_{k\to\infty}\prod\limits_{n=1}^k\det\left(I+\dfrac{R_E}{inu}\right).
\end{eqnarray*} 

Our next step is an observation, made by Duistermaat and Heckman \cite{DuistermaatHeckman83}, that the determinant in the second product can be expressed in terms of characteristic classes. Recall that the total Chern class of a complex vector bundle $E$ may be written as 
\[c(E) = \det\left(I+\frac{R_E}{2\pi i}\right) = 1 + c_1(E) +\cdots + c_n(E).\]  By the splitting principle, this determinant can be formally factorized into the product
\[\det\left(I+\frac{R_E}{2\pi i}\right) = \prod_{j=1}^m \left(1+x_j\right),\] where the $x_j$'s are the so-called \emph{Chern roots}, i.e. the first Chern classes of the respective formal line bundles $L_j$, where we regard $E\cong \oplus_{j=1}^m L_j$ formally as a direct sum of line bundles. Applying this factorization then yields equation \eref{eqn:invEulerprod} and completes the proof of the lemma.
\end{proof}

Note that both infinite products in formula \eref{eqn:invEulerprod} are divergent. We now propose a regularization procedure for $e(\nu(E))$. The first infinite product in \eref{eqn:invEulerprod} is handled using zeta function regularization (cf. \cite{RaySinger71}), which we recall using the approach in \cite{QuineHS93,Voros87}. 

Let $\{\mu_n\}$ be a sequence of increasing nonzero numbers and 
$Z_{\mu}(s) = \sum_{n=1}^{\infty} \mu_n^{-s}$ be its \emph{associated zeta function}. The sequence is said to be \emph{zeta-regularizable} if $Z_{\mu}(s)$ has a meromorphic continuation to a half plane containing the origin, and this meromorphic continuation is analytic at the origin and has only simple poles. If $\{\mu_n\}$ is a zeta-regularizable sequence, its \emph{zeta regularized product} is defined to be 
\[\sideset{}{_z}\prod_{n=1}^{\infty} \mu_n := \exp(-Z_{\mu}'(0)).\]
It follows from the definition that if $c$ is any nonzero number, then (cf. \cite[(1)]{QuineHS93})
\[\sideset{}{_z}\prod_{n=1}^{\infty} c\mu_n = c^{Z_{\mu}(0)}\sideset{}{_z}\prod_{n=1}^{\infty} \mu_n,\]
while if $\{\mu_n\}$ is the union of two sequences $\{\mu_{1,n}\}$ and $\{\mu_{2,n}\}$ that may be reordered arbitrarily, then (cf. \cite[(2)]{QuineHS93})
\[\sideset{}{_z}\prod_{n=1}^{\infty} \mu_n = \sideset{}{_z}\prod_{n=1}^{\infty} \mu_{1,n} \sideset{}{_z}\prod_{n=1}^{\infty} \mu_{2,n}.\]
It follows that the zeta-regularization of the first infinite product in \eref{eqn:invEulerprod} is given by
\begin{equation}\label{eqn:reg1st}
\sideset{}{_z}\prod_{n=1}^{\infty} \left(\frac{nu}{2\pi}\right)^m = \left[\left(\dfrac{u}{2\pi}\right)^{\zeta(0)}\sideset{}{_z}\prod_{n=1}^{\infty} n\right]^m = \left(\dfrac{2\pi}{\sqrt{u}}\right)^m,
\end{equation}
where the associated zeta function is the Riemann zeta function, with $\zeta(0)=-\frac{1}{2}$ and $\zeta'(0) = -\log \sqrt{2\pi}$.

Next, we implement the regularization of the second infinite product in \eref{eqn:invEulerprod} using the following regularization map $\psi_{\reg}$. Define $\psi_{\reg}\colon \Hy(M)[u,u^{-1}] \to \Hy(M)[u,u^{-1}]$ to be the operator given by extending the map
\[(1+A) \mapsto (1+A)e^{-A}\] multiplicatively to a finite product of factors of this form. Here, $A$ is a linear rational expression in terms of the Chern roots and the indeterminate $u$. 

\begin{rem}\label{rem:uniform}
We remark that in the language of Weierstrass's theory of infinite products, what $\psi_{\reg}$ achieves is to append a convergence factor to each factor of the form $(1+A)$. In particular, when the number of factors tends to infinity, the resultant infinite product is well-known to be uniformly convergent in every bounded set.
\end{rem}

We then define the \emph{regularized $S^1$-equivariant Euler class} of $\nu(E)$ to be
\begin{equation}\label{eqn:reginvEuler}
e_{\reg}(\nu(E)) := \sideset{}{_z}\prod\limits_{n=1}^{\infty} \left(\dfrac{nu}{2\pi}\right)^m \cdot \lim_{k\to\infty} \psi_{\reg} \left[\prod\limits_{n=1}^k \prod\limits_{j=1}^m \left(1+\dfrac{2\pi x_j}{nu}\right)\right].
\end{equation}

\begin{pro}\label{pro:invEuler}
The regularized equivariant Euler class of $\nu(E)$ evaluates to
\begin{equation}\label{eqn:evalreginvEuler}
e_{\reg}(\nu(E)) = \left(\dfrac{2\pi}{\sqrt{u}}\right)^m \prod_{j=1}^m \left[\hat{\Gamma}\left(\dfrac{2\pi x_j}{u}\right)\right]^{-1}.
\end{equation}
\end{pro}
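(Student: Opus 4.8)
The plan is to reduce the claim to the classical Weierstrass product representation of the reciprocal $\hat{\Gamma}$-function. The first factor in the definition \eref{eqn:reginvEuler} has already been evaluated by zeta-regularization in \eref{eqn:reg1st} as $(2\pi/\sqrt{u})^m$, so the entire content of the proposition lies in computing the second factor, namely the $\psi_{\reg}$-regularized limit of the double product. My strategy is therefore to apply $\psi_{\reg}$ explicitly, rearrange the resulting product into a form manifestly indexed by the Chern roots, and then recognize each inner factor as a Weierstrass product for $1/\hat{\Gamma}$.

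First I would unwind the definition of $\psi_{\reg}$. Since $\psi_{\reg}$ sends each factor $(1+A)$ to $(1+A)e^{-A}$ and acts multiplicatively, taking $A = 2\pi x_j/(nu)$ gives
\[\psi_{\reg}\left[\prod_{n=1}^k\prod_{j=1}^m\left(1+\dfrac{2\pi x_j}{nu}\right)\right] = \prod_{n=1}^k\prod_{j=1}^m\left(1+\dfrac{2\pi x_j}{nu}\right)e^{-2\pi x_j/(nu)}.\]
To streamline the algebra I would set $z_j := 2\pi x_j/u$, interchange the finite product over $j$ with the product over $n$, and pass to the limit $k\to\infty$. This produces
\[\lim_{k\to\infty}\psi_{\reg}\left[\prod_{n=1}^k\prod_{j=1}^m\left(1+\dfrac{2\pi x_j}{nu}\right)\right] = \prod_{j=1}^m\prod_{n=1}^{\infty}\left(1+\dfrac{z_j}{n}\right)e^{-z_j/n}.\]

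The decisive step is to identify the inner infinite product. Recalling that the Euler--Mascheroni constant $\gamma$ enters the Weierstrass product for the reciprocal Gamma function as $1/\Gamma(z) = z\,e^{\gamma z}\prod_{n=1}^{\infty}(1+z/n)e^{-z/n}$, and combining this with the relation $\Gamma(1+z)=z\Gamma(z)$ and the definition $\hat{\Gamma}(z) = e^{\gamma z}\Gamma(1+z)$ of Example \ref{egs:gammagen}, I obtain
\[\prod_{n=1}^{\infty}\left(1+\dfrac{z_j}{n}\right)e^{-z_j/n} = \dfrac{1}{z_j\,e^{\gamma z_j}\,\Gamma(z_j)} = \dfrac{1}{e^{\gamma z_j}\Gamma(1+z_j)} = \left[\hat{\Gamma}(z_j)\right]^{-1}.\]
Substituting $z_j = 2\pi x_j/u$ and multiplying by the first factor $(2\pi/\sqrt{u})^m$ from \eref{eqn:reg1st} then yields \eref{eqn:evalreginvEuler} at once.

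The algebra above is routine; the only point genuinely requiring care is the justification that passing to the limit termwise is legitimate, so that the formal manipulation of these a priori divergent products is meaningful rather than merely symbolic. This is precisely what the insertion of the convergence factor $e^{-A}$ accomplishes: as recorded in Remark \ref{rem:uniform}, appending it to each $(1+A)$ converts the divergent product into a Weierstrass product that converges uniformly on bounded sets, which both permits the interchange of products and the limit and makes the identification with $1/\hat{\Gamma}$ rigorous. I thus expect the main obstacle to be not the computation but the careful articulation of this convergence, after which the Weierstrass factorization delivers the result immediately.
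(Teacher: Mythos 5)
Your proposal is correct and follows essentially the same route as the paper's own proof: apply $\psi_{\reg}$ factorwise, pass to the limit using the uniform convergence noted in Remark \ref{rem:uniform}, and identify the resulting infinite product with the Weierstrass product for $[\hat{\Gamma}(z)]^{-1}$. The only difference is that you spell out the Weierstrass factorization of $1/\Gamma$ explicitly, a step the paper leaves implicit.
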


\begin{proof}
Observe that $\psi_{\reg}$ acts on the second product to give
\begin{equation}\label{eqn:regaction}
\psi_{\reg} \left[\prod\limits_{n=1}^k \prod\limits_{j=1}^m \left(1+\dfrac{2\pi x_j}{nu}\right)\right] = \prod\limits_{n=1}^k \prod\limits_{j=1}^m \left[\left(1+\dfrac{2\pi x_j}{nu}\right)e^{-2\pi x_j/nu}\right].
\end{equation}
It follows from Remark \ref{rem:uniform}, together with \eref{eqn:reg1st} and \eref{eqn:regaction}, that
\begin{eqnarray*}
e_{\reg}(\nu(E)) &= \left(\dfrac{2\pi}{\sqrt{u}}\right)^m \prod\limits_{n=1}^{\infty} \prod\limits_{j=1}^m \left(1+\dfrac{2\pi x_j}{nu}\right) e^{-2\pi x_j/nu}\\
&= \left(\dfrac{2\pi}{\sqrt{u}}\right)^m \prod\limits_{j=1}^m \left[\hat{\Gamma}\left(\dfrac{2\pi x_j}{u}\right)\right]^{-1}.
\end{eqnarray*} This completes the proof.
\end{proof}

\begin{rem}
The reader may observe that the form of the regularized product \eref{eqn:reginvEuler} is closely related to the functional determinant of Voros. In fact, it is the product of the functional determinant with a non-constant exponential factor that was calculated in \cite{QuineHS93,Voros87}.
\end{rem}

Finally, we show how our proposed regularization behaves when $E=\eta\otimes \CC$ is the complexification of a real rank $m$ vector bundle $\pi_R\colon \eta\to M$. Note that since $E$ is now the complexification of a real vector bundle, $R_E$ is skew-symmetric, so that
\[c(E) = \det\left(I+\frac{R_E}{2\pi i}\right) = \det\left(I-\frac{R_E}{2\pi i}\right).\]
In particular, since we are working over the complex numbers, the odd Chern classes vanish. Observe also that $c(E)$ can now be formally factorized into
\[c(E) = \prod_{j=1}^{\lfloor m/2 \rfloor} \left(1+x_j \right)\left(1-x_j \right),\]
where the $x_j$'s are the Chern roots coming from the formal splitting of $E$ described in Lemma \ref{lem:splitEuler}. The $S^1$-equivariant Euler class of $\nu(E)$ is then given by the formula
\begin{equation}\label{eqn:zetaregdreal}
\fl e(\nu(E)) = \lim_{k\to\infty}\prod\limits_{n=1}^k \left(\dfrac{nu}{2\pi}\right)^m \cdot \lim_{k\to\infty}\left[\prod\limits_{n=1}^k \prod\limits_{j=1}^{\lfloor m/2 \rfloor} \left(1+\dfrac{2\pi x_j}{inu}\right)\left(1- \dfrac{2\pi x_j}{inu}\right)\right].
\end{equation}
The regularization procedure in this case then defines $e_{\reg}(\nu(E))$ to be
\begin{equation}\label{eqn:zetarealregd}
\fl e_{\reg}(\nu(E)) := \sideset{}{_z}\prod\limits_{n=1}^{\infty} \left(\dfrac{nu}{2\pi}\right)^m \cdot \lim_{k\to\infty} \psi_{\reg} \left[\prod\limits_{n=1}^k \prod\limits_{j=1}^{\lfloor m/2 \rfloor} \left(1+\dfrac{2\pi x_j}{inu}\right)\left(1- \dfrac{2\pi x_j}{inu}\right)\right].
\end{equation}

\begin{pro}\label{pro:realcase}
Let $\pi\colon E\to M$ be the complexification $E=\eta\otimes \CC$ of a real rank $m$ vector bundle $\eta$ over $M$, such that $E$ has a spin structure. Then the regularized $S^1$-equivariant Euler class of $\nu(E)$ evaluates to
\begin{equation}\label{eqn:realEulerregd}
e_{\reg}(\nu(E))  = \left(\dfrac{2\pi}{\sqrt{u}}\right)^m \prod_{j=1}^{\lfloor m/2 \rfloor} \left[\hat{A}\left(\dfrac{4\pi^2 x_j}{u}\right)\right]^{-1}.
\end{equation} In particular, if $\eta = TM$ is the tangent bundle of $M$, then our regularization procedure reduces to the Atiyah--Witten regularization, up to scaling of the $\hat{A}$-genus.
\end{pro}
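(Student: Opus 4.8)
The plan is to evaluate the two factors in \eref{eqn:zetarealregd} separately, exactly as in the proof of Proposition \ref{pro:invEuler}, and then to recognize the surviving infinite product as a Weierstrass product for $\sinh$. The zeta-regularized prefactor is literally unchanged from the complex case, since it does not involve the Chern roots: by \eref{eqn:reg1st} it again contributes $(2\pi/\sqrt{u})^m$. Hence the entire content of the proof lies in the second factor. The crucial observation --- and the reason the $\hat{A}$-genus appears here rather than the $\hat{\Gamma}$-genus --- is what happens when $\psi_{\reg}$ acts on the paired factors $(1 + A_j)(1 - A_j)$, with $A_j = 2\pi x_j/(inu)$. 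Applying the definition of $\psi_{\reg}$ to each factor separately appends convergence factors $e^{-A_j}$ and $e^{+A_j}$, which cancel in pairs. Thus, after regularization, each pair returns simply $(1+A_j)(1-A_j) = 1 - A_j^2$, and since $i^2 = -1$ this equals $1 + 4\pi^2 x_j^2/(n^2 u^2)$. No exponential convergence factor survives, so convergence of the product over $n$ must be argued directly; but this is immediate, as the factors tend to $1$ like $n^{-2}$.

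Next I would identify the product over $n$ with a classical Weierstrass product. Using the standard factorization $\sinh(w)/w = \prod_{n=1}^{\infty}(1 + w^2/n^2\pi^2)$ with the choice $w = 2\pi^2 x_j/u$, I obtain
\[
\prod_{n=1}^{\infty}\left(1 + \dfrac{4\pi^2 x_j^2}{n^2 u^2}\right) = \dfrac{\sinh(2\pi^2 x_j/u)}{2\pi^2 x_j/u} = \left[\hat{A}\left(\dfrac{4\pi^2 x_j}{u}\right)\right]^{-1},
\]
where the last equality is just the definition $\hat{A}(z) = (z/2)/\sinh(z/2)$ evaluated at $z = 4\pi^2 x_j/u$. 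Taking the product over $j$ from $1$ to $\lfloor m/2\rfloor$ and multiplying by the prefactor $(2\pi/\sqrt{u})^m$ then yields \eref{eqn:realEulerregd}.

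For the concluding assertion I would specialize to $\eta = TM$, so that $E = TM\otimes\CC$ and the Chern roots $x_j$ above are precisely the formal roots entering the Pontryagin classes of $TM$ --- the pairing of $x_j$ with $-x_j$ in the factorization of $c(E)$ reflects exactly the relation between $p(TM)$ and $c(TM\otimes\CC)$. Recalling that the $\hat{A}$-genus of $M$ is $\langle \prod_j \hat{A}(x_j), [M]\rangle$, the product $\prod_j [\hat{A}(4\pi^2 x_j/u)]^{-1}$ is the (inverse) $\hat{A}$-class evaluated on rescaled roots; the rescaling by $4\pi^2/u$ together with the overall factor $(2\pi/\sqrt{u})^m$ is what is meant by the phrase ``up to scaling.'' Matching this against the Atiyah--Witten computation completes the argument.

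I expect the only genuinely delicate point to be bookkeeping rather than any conceptual difficulty: keeping the factors of $i$, $2\pi$, and $u$ consistent through the simplification $1 - A_j^2$ and through the choice of $w$ in the Weierstrass product, and making precise the sense in which the overall scaling matches the normalization of the $\hat{A}$-genus used in the Atiyah--Witten derivation.
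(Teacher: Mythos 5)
Your proposal is correct and follows essentially the same route as the paper's own proof: the pairwise cancellation of the convergence factors $e^{\mp 2\pi x_j/(inu)}$ under $\psi_{\reg}$, the resulting product $\prod_n\bigl(1+4\pi^2 x_j^2/(n^2u^2)\bigr)$, and its identification with $\sinh(2\pi^2 x_j/u)/(2\pi^2 x_j/u)=\bigl[\hat{A}(4\pi^2 x_j/u)\bigr]^{-1}$ via the Weierstrass product, followed by the specialization to $\eta=TM$. Your added remarks on the unchanged zeta-regularized prefactor and on direct convergence of the product are consistent with, and slightly more explicit than, the paper's argument.
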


\begin{proof}
We consider the action of the map $\psi_{\reg}$ on the product in \eref{eqn:zetarealregd}. Observe that 
\begin{eqnarray*}
\fl \psi_{\reg}\left[\prod\limits_{n=1}^k \prod\limits_{j=1}^{\lfloor m/2 \rfloor} \left(1+\dfrac{2\pi x_j}{inu}\right)\left(1- \dfrac{2\pi x_j}{inu}\right)\right] 
&= &\prod\limits_{n=1}^k \prod\limits_{j=1}^{\lfloor m/2 \rfloor} \left[\left(1+\dfrac{2\pi x_j}{inu}\right)e^{-2\pi x_j/inu}\right.\\
&& \left.\left(1- \dfrac{2\pi x_j}{inu}\right)e^{2\pi x_j/inu}\right]\\
&= &\prod\limits_{n=1}^k \prod\limits_{j=1}^{\lfloor m/2 \rfloor} \left[1+\left(\dfrac{2\pi x_j}{nu}\right)^2 \right].
\end{eqnarray*}
Note that $\sinh (2\pi^2 x/u)/(2\pi^2 x/u) = \prod_{n=1}^{\infty} \left[1+ 4\pi^2 x^2 / (n^2u^2) \right]$. It follows that the regularized $S^1$-equivariant Euler class is given by
\[
\fl e_{\reg}(\nu(E)) = \left(\dfrac{2\pi}{\sqrt{u}}\right)^m \prod_{j=1}^{\lfloor m/2 \rfloor} \dfrac{\sinh (2\pi^2 x_j/u)}{2\pi^2 x_j/u} = \left(\dfrac{2\pi}{\sqrt{u}}\right)^m \prod_{j=1}^{\lfloor m/2 \rfloor}\left[\hat{A}\left(\dfrac{4\pi^2 x_j}{u}\right)\right]^{-1}.
\]
In particular, if $\eta = TM$ is the tangent bundle of $M$, then the evaluation of $e_{\reg}(\nu(E))$ against the fundamental class of $M$ gives the inverse of the $\hat{A}$-genus of $M$, up to normalization.
\end{proof}

\section{Multiple Zeta Values and an Algebraic Formalism}\label{sec:MZVs}

In this section, we describe an algebraic formalism, developed by Hoffman \cite{Hoffman97} in his study of multiple zeta values (MZVs), that allows us to give an alternative interpretation of the map $\psi_{\reg}$ in our proposed regularization of the inverse equivariant Euler class.

To set up Hoffman's formalism, we first recall some basic theory of symmetric functions \cite{Macdonald79}. Recall that the elementary symmetric polynomials $\{e_i\}$ are generated by the function
\[E(t) = \prod\limits_{n=1}^{\infty}(1+x_n t) = \sum_{i=0}^{\infty}e_i t^i,\] while the power sum symmetric polynomials $\{p_i\}$ are generated by the function
\[P(t) = \sum\limits_{n=1}^{\infty}\frac{d}{dt}\log (1-x_n t)^{-1} = \sum_{i=1}^{\infty}p_i t^{i-1}.\] It is clear that these two functions satisfy the following relation
\begin{equation}\label{eqn:P&E_rel}
P(t)= \frac{d}{dt} \log E(-t)^{-1}.
\end{equation} We shall also need the monomial symmetric polynomials $\{m_{\lambda}\}$, where $\lambda = (\lambda_1,\lambda_2,\ldots)$ is a partition of an integer $n>0$, i.e. a sequence of numbers $\lambda_1>\lambda_2>\cdots$ with finitely many nonzero entries such that $\sum_{i=1}^{\infty} \lambda_i = n$. Note that each of these collections of symmetric polynomials form a basis for $Sym$, the algebra of symmetric functions in infinitely many variables.

In his study of MZVs, Hoffman \cite{Hoffman97} has defined a homomorphism $Z\colon Sym \to \RR$, such that on the power sum symmetric polynomials $p_i$,
\[Z(p_1)=\gamma, \quad Z(p_i) = \zeta(i) \textrm{ for } i\geq 2.\]
In particular, $Z$ acts on the generating function $P(t)$ to give
\[Z(P(t)) = \gamma + \sum_{i=2}^{\infty} \zeta(i)t^{i-1} = -\psi (1-t),\] where $\psi(z)$ is the logarithmic derivative of $\Gamma(z)$. It follows from \eref{eqn:P&E_rel} that
\[Z(E(t)) = \frac{1}{\Gamma(1+t)}\]

We now observe that a similar map $\hat{Z}\colon Sym \to \RR$ can be defined to yield the $\hat{\Gamma}$-function. Essentially, $\hat{Z}$ is a truncated version of $Z$ and acts on the power sum symmetric polynomials in the following way:
\[\hat{Z}(p_1)=0, \quad \hat{Z}(p_i) = \zeta(i) \textrm{ for } i\geq 2.\]
It follows that
\begin{equation}\label{eqn:HoffGammahat}
\hat{Z}(E(t)) = \frac{1}{\hat{\Gamma}(t)}.
\end{equation}
We use this formalism to deduce the following

\begin{pro}
Let $E$ be a complex vector bundle over $M$ and $x$ be one of its Chern roots. Let $\psi_{\reg}$ be the regularization map defined in section \ref{sec:derivation}. Then the following identity holds:
\begin{eqnarray*}
\lim_{k\to\infty} \psi_{\reg}\left(\prod\limits_{n=1}^k \left(1+\dfrac{2\pi x}{nu}\right)\right)  &= \hat{Z}\left(\lim_{k\to\infty} \prod\limits_{n=1}^k \left(1+\dfrac{2\pi x}{nu}\right) \right)\\
&= \left(\hat{\Gamma}\left(\frac{2\pi x}{u}\right)\right)^{-1}.
\end{eqnarray*}
\end{pro}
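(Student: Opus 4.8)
The statement is a chain of two equalities, and my plan is to establish each of the three expressions as an independent avatar of $1/\hat{\Gamma}(2\pi x/u)$, and then to explain conceptually why the analytic regularization $\psi_{\reg}$ and the algebraic map $\hat{Z}$ are forced to agree. The rightmost equality I would read off directly from \eref{eqn:HoffGammahat}. The key is to recognize the infinite product $\prod_{n=1}^{\infty}(1+\frac{2\pi x}{nu})$ as the generating function $E(t)$ of the elementary symmetric polynomials, under the specialization of its abstract variables $x_n\mapsto 1/n$ together with $t=2\pi x/u$, since then $x_n t = 2\pi x/(nu)$. Under this specialization the power sums become $p_i=\sum_n n^{-i}$, which is $\zeta(i)$ for $i\geq 2$ but the divergent harmonic series for $i=1$; the homomorphism $\hat{Z}$ is precisely the regularized form of this specialization, sending $p_1\mapsto 0$ and $p_i\mapsto\zeta(i)$. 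Applying $\hat{Z}$ to the product and invoking \eref{eqn:HoffGammahat} with $t=2\pi x/u$ then gives $1/\hat{\Gamma}(2\pi x/u)$, which is the middle term.

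For the leftmost expression I would simply reuse the computation already carried out in Proposition \ref{pro:invEuler}, specialized to a single Chern root. There $\psi_{\reg}$ appends the convergence factor $e^{-2\pi x/nu}$ to each factor, as in \eref{eqn:regaction}, and by Remark \ref{rem:uniform} the resulting product converges uniformly on bounded sets to the Weierstrass product $\prod_{n=1}^{\infty}(1+\frac{2\pi x}{nu})\,e^{-2\pi x/nu}=1/\hat{\Gamma}(2\pi x/u)$. Thus all three expressions equal $1/\hat{\Gamma}(2\pi x/u)$, and both asserted equalities follow at once.

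The part I regard as genuinely requiring care — and which supplies the conceptual reason the two regularizations coincide rather than merely producing the same number — is the comparison of the two procedures at the level of logarithms. Expanding $\log\prod_n(1+\frac{2\pi x}{nu})=\sum_{i\geq 1}\frac{(-1)^{i-1}}{i}\,p_i\,t^i$ with $t=2\pi x/u$ and $p_i=\sum_n n^{-i}$, I would isolate the $i=1$ summand as the sole source of divergence. On one side, $\psi_{\reg}$ removes exactly this term: the logarithm of its convergence factor contributes $-\frac{2\pi x}{u}\sum_n n^{-1}$, cancelling the $i=1$ summand and leaving $\sum_{i\geq 2}\frac{(-1)^{i-1}}{i}\zeta(i)\,t^i$. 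On the other side, $\hat{Z}$ annihilates the same term through $\hat{Z}(p_1)=0$, with $\hat{Z}$ commuting with the formal $\log$ and $\exp$ because it is an algebra homomorphism on $\mathrm{Sym}[[t]]$. The main obstacle is to justify these manipulations of divergent series rigorously: one must argue that the term-by-term cancellation of the $p_1$ contribution is legitimate — most safely by inserting the convergence factor first, so that every series in sight converges, and only then passing to the power-sum expansion — and that the coefficientwise action of $\hat{Z}$ is compatible with the exponentiation recovering $1/\hat{\Gamma}(t)$. Once this is settled the two logarithmic series are literally identical, and exponentiating establishes the equality of all three terms.
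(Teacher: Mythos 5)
Your proposal is correct and follows essentially the same route as the paper: the left-hand side is identified with the Weierstrass product $\prod_{n=1}^{\infty}\left(1+\frac{2\pi x}{nu}\right)e^{-2\pi x/nu}=1/\hat{\Gamma}\left(\frac{2\pi x}{u}\right)$ via Proposition \ref{pro:invEuler}, and the right-hand side via the specialization $x_n\mapsto 1/n$, $t=2\pi x/u$ in \eref{eqn:HoffGammahat}. Your additional logarithmic comparison, showing that $\psi_{\reg}$ and $\hat{Z}$ both annihilate precisely the divergent $p_1$-term, is a sound and more explicit justification than the paper provides, but it does not change the underlying argument.
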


\begin{proof}
Recall that the left-hand side gives the infinite product expansion of $1/\hat{\Gamma}(\frac{2\pi x}{u})$ (cf. Proposition \ref{pro:invEuler}). It follows from \eref{eqn:HoffGammahat} that the right-hand side also yields the same expression.
\end{proof}

We now state a straightforward variation of a result of Hoffman, which gives a rather elegant description of the coefficients of the multiplicative $\hat{\Gamma}$-sequence. We omit the proof, since it is identical to the one given in \cite{Hoffman01}.

\begin{pro}
Let $\lambda=(\lambda_1,\lambda_2,\ldots)$ be a partition of $n$ and write $c_{\lambda}$ for the product $c_{\lambda_1}c_{\lambda_2}\cdots$ of Chern classes. Then $\hat{Z}(m_{\lambda})$ is the coefficient of $c_{\lambda}$ in the polynomial $\hat{\Gamma}_n(c_1,\ldots,c_n)$.
\end{pro}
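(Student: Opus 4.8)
The plan is to identify $\hat{\Gamma}_n$ with a concrete symmetric function and then read off its coefficients by means of a single duality identity, following Hoffman. First I would pin down what the coefficient $b_\lambda$ of $c_\lambda$ in $\hat{\Gamma}_n$ actually computes. By the defining relation of the multiplicative sequence, together with the splitting principle used in the proof of Lemma \ref{lem:splitEuler}, the substitution of the $i$-th Chern class $c_i$ by the $i$-th elementary symmetric function $e_i$ of the Chern roots $x_1,x_2,\ldots$ turns $\hat{\Gamma}_n(c_1,\ldots,c_n)$ into the degree-$n$ part $\Phi_n$ of the product $\prod_j \hat{\Gamma}(x_j)^{-1}$. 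Under this substitution the monomial $c_\lambda$ becomes $e_\lambda:=e_{\lambda_1}e_{\lambda_2}\cdots$, and since the $e_i$ are algebraically independent \cite{Macdonald79} the family $\{e_\lambda:\lambda\vdash n\}$ is a basis; hence $b_\lambda$ is exactly the coefficient of $e_\lambda$ in the expansion of $\Phi_n$.

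Next I would expand $\Phi_n$ in the monomial basis. Writing $\hat{\Gamma}(t)^{-1}=\sum_i a_i t^i$ and multiplying out $\prod_j \sum_i a_i x_j^i$, the degree-$n$ part collects, for each partition $\mu$ of $n$, the term $\left(\prod_i a_{\mu_i}\right) m_\mu$. Because $\hat{Z}$ is a ring homomorphism with $\hat{Z}(e_i)=a_i$ --- this is the content of \eref{eqn:HoffGammahat}, since $E(t)=\sum_i e_i t^i$ forces $\sum_i \hat{Z}(e_i)t^i = \hat{\Gamma}(t)^{-1}$ --- the coefficient $\prod_i a_{\mu_i}$ equals $\hat{Z}(e_\mu)$, so that $\Phi_n=\sum_{\mu\vdash n}\hat{Z}(e_\mu)\,m_\mu$.

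The decisive step is a symmetric-function duality. Repeating the expansion just performed, but with $a_i$ replaced by the elementary symmetric functions $e_i(x)$ in an auxiliary family of variables $x$, shows that $\sum_\mu e_\mu(x)\,m_\mu(y)=\prod_{i,j}(1+x_i y_j)$; as the right-hand side is symmetric under interchange of the two families, $\sum_\mu e_\mu(x)\,m_\mu(y)=\sum_\mu m_\mu(x)\,e_\mu(y)$. Applying $\hat{Z}$ in the $x$-variables alone, treating the $y$-variables as scalars, sends the left-hand side to $\sum_\mu \hat{Z}(e_\mu)\,m_\mu(y)$, which is precisely $\Phi_n$ in the variables $y$, and sends the right-hand side to $\sum_\lambda \hat{Z}(m_\lambda)\,e_\lambda(y)$. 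Comparing with the expansion $\Phi_n=\sum_\lambda b_\lambda e_\lambda$ from the first step, and invoking linear independence of the $e_\lambda$, yields $b_\lambda=\hat{Z}(m_\lambda)$, as claimed.

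The one point demanding care is this duality step: recognizing the common value $\prod_{i,j}(1+x_i y_j)$ of the two expansions, and justifying that $\hat{Z}$ may be applied to one family of variables while the other is kept formal. Once this reproducing kernel is in hand, the remainder is bookkeeping, and no convergence issue intervenes, since homogeneity makes every sum in a fixed degree finite. This is why the argument is, as stated, formally identical to Hoffman's in \cite{Hoffman01}.
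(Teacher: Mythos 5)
Your proposal is correct and coincides with the proof the paper intends: the paper omits the argument precisely because it is Hoffman's, and your chain of steps --- identifying $\hat{\Gamma}_n$ with the degree-$n$ part of $\prod_j \hat{\Gamma}(x_j)^{-1}$ via the splitting principle, expanding in the monomial basis using $\hat{Z}(e_i)=a_i$, and then invoking the dual Cauchy identity $\prod_{i,j}(1+x_iy_j)=\sum_\mu e_\mu(x)\,m_\mu(y)=\sum_\mu m_\mu(x)\,e_\mu(y)$ to swap the roles of $e_\mu$ and $m_\mu$ --- is exactly the argument of \cite{Hoffman01}. There are no gaps; the justification for applying $\hat{Z}$ in one set of variables while treating the other as formal coefficients is sound, since each graded piece involves only finite sums.
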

	
\section{Some Properties of the $\hat{\Gamma}$-genus}\label{sec:Gamma}

In this section, we compute the $\hat{\Gamma}$-genus for some manifolds and give some properties of the $\hat{\Gamma}$-genus. We note that we only need $M$ to be an almost complex manifold for $\hat{\Gamma}(M)$ to be well-defined, so we will not require $M$ to be spin in this section.

We begin by listing, in Table \ref{tab:Gammaseq}, the first few polynomials of the multiplicative sequence $\{\hat{\Gamma}_n\}$. These are computed using the algorithm described by Libgober and Wood \cite{LibgoberWood90}, who refined it from a brief description given by Hirzebruch \cite{Hirzebruch95}.

\begin{table}%[hbt]
\caption{\label{tab:Gammaseq}The first few polynomials of the $\hat{\Gamma}$-sequence.}
\begin{tabular}{rp{0.85\textwidth}} 
\br
$n$ & $\hat{\Gamma}_n$\\
\mr
1 & 0\\
2 & $-\frac{1}{2}\zeta(2)(c_1^2 - 2c_2)$\\
%\addlinespace[0.1cm]
3 & $\frac{1}{3}\zeta(3)(c_1^3 - 3c_2c_1+3c_3)$\\
%\addlinespace[0.1cm]
4 & $\zeta(4)(c_4 - c_3c_1) + \frac{1}{2}((\zeta(2))^2- \zeta(4))c_2^2
+ (\zeta(4) - \frac{1}{2}(\zeta(2))^2)c_2c_1^2 + (\frac{1}{8}(\zeta(2))^2 - \frac{1}{4}\zeta(4))c_1^4$\\
5 & $\zeta(5)(c_5-c_4c_1)+(\zeta(2)\zeta(3)-\zeta(5))c_3c_2+(\zeta(5)-\frac{1}{2}\zeta(2)\zeta(3))c_3c_1^2+(\zeta(5)-\zeta(2)\zeta(3))c_2^2c_1+(\frac{5}{6}\zeta(2)\zeta(3)-\zeta(5))c_2c_1^3+(\frac{1}{5}\zeta(5)-\frac{1}{6}\zeta(2)\zeta(3))c_1^5$\\
%\addlinespace[0.1cm]
\br
\end{tabular}
\end{table}

\begin{egs}
Table \ref{tab:GammaCPn} gives the values of the $\hat{\Gamma}$-genus of $\CC\textrm{P}^n$ for small values of $n$. These are computed using the formula for the total Chern class of $\CC\textrm{P}^n$ 
\[c(\CC\textrm{P}^n) = (1+h_n)^{n+1},\] 
where $h_n\in H^2(\CC\textrm{P}^n,\ZZ)$ is a generator for the second cohomology group of $\CC\textrm{P}^n$. 

\begin{table}%[htb]
\caption{\label{tab:GammaCPn}Values of $\hat{\Gamma}(\CC\textrm{P}^n)$ for $n\leq 5$.}
\begin{indented}\item[]
\begin{tabular}{rp{0.05 \textwidth}p{0.3 \textwidth}} 
\br
$n$ & &$\hat{\Gamma}(\CC\textrm{P}^n)$\\
\mr
1 & &0\\
2 & &$-\frac{3}{2}\zeta(2)h_2^2$\\
%\addlinespace[0.15cm]
3 & &$\frac{4}{3}\zeta(3)h_3^3$\\
%\addlinespace[0.15cm]
4 & &$\frac{105}{16}\zeta(4)h_4^4$\\
%\addlinespace[0.15cm]
5 & &$(\frac{6}{5}\zeta(5) - 6\zeta(2)\zeta(3))h_5^5$\\
%\addlinespace[0.05cm]
\br
\end{tabular}
\end{indented}
\end{table}
\end{egs}

\begin{egs}\label{egs:LeBrun}
Consider the product of a K3 surface with the 2-sphere $M=K3\times S^2$. Using twistor theory, LeBrun \cite{LeBrun99} has shown that on $M$, there is a family of complex structures $J_n$ parametrized by an integer $n>0$. Thus, for each $n$, we have the following Chern numbers for $M$ (cf. also \cite{BarthPV84}):
\begin{equation}\label{eqn:K3Chern}
c_1^3(M,J_n) = 0, \quad c_2c_1(M,J_n)=48n, \quad c_3(M,J_n) = 48.
\end{equation} 
For $n=1$, $M$ has the product complex structure, so that since $\hat{\Gamma}(S^2)=0$, it follows from Lemma \ref{lem:genusmult} that $\hat{\Gamma}(M,J_1) = 0$. The vanishing of the $\hat{\Gamma}$-genus for $(M,J_1)$ can also be verified by comparing \eref{eqn:K3Chern} with Table \ref{tab:Gammaseq}. However, for all other values of $n$, $\hat{\Gamma}(M,J_n)$ does not vanish, so we see that the $\hat{\Gamma}$-genus of a 6-manifold depends on the choice of its complex structure. This also shows that all the hypotheses of Lemma \ref{lem:genusmult} are needed for the lemma to be true.
\end{egs}

We give some properties of the $\hat{\Gamma}$-genus for certain almost complex manifolds.

\begin{pro}\label{pro:gamma-hat}
Let $M$ be a smooth almost complex manifold. The $\hat{\Gamma}$-genus has the following properties:
\begin{enumerate}
\item The $\hat{\Gamma}$-genus vanishes for any Riemann surface $\Sigma$. Furthermore, if $M\times \Sigma$ is a product of a Riemann surface with $M$, and has the almost complex structure induced from those of $M$ and $\Sigma$, then its $\hat{\Gamma}$-genus also vanishes.

\item The $\hat{\Gamma}$-genus is a smooth invariant for $M$ if $M$ is a 4- or 8-manifold. However, it depends on the choice of a complex structure on $M$ if $M$ is a 6-manifold and, therefore, cannot be a smooth invariant of any almost complex 12-manifold that is a product of two smooth almost complex 6-manifolds.
\end{enumerate}
\end{pro}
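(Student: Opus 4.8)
The plan is to reduce both parts to the expansion of the generating function $1/\hat{\Gamma}(z)$ in power sums of the Chern roots, together with the multiplicativity of the genus (Lemma \ref{lem:genusmult}). For part (1), I would first record that the degree-one polynomial of the sequence vanishes, $\hat{\Gamma}_1 = 0$ (as in Table \ref{tab:Gammaseq}). This holds because $\log\hat{\Gamma}(z) = \gamma z + \log\Gamma(1+z) = \sum_{k\geq 2}\frac{(-1)^k\zeta(k)}{k}z^k$ carries no linear term, so the coefficient of $z$ in $1/\hat{\Gamma}(z)$ is trivial. Hence for any Riemann surface $\Sigma$ (complex dimension one), $\hat{\Gamma}(\Sigma) = \langle \hat{\Gamma}_1(c_1),[\Sigma]\rangle = 0$, and the product statement is immediate: $\hat{\Gamma}(M\times\Sigma) = \hat{\Gamma}(M)\hat{\Gamma}(\Sigma) = 0$ by Lemma \ref{lem:genusmult}.

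For part (2), the conceptual engine is the expansion $\prod_j 1/\hat{\Gamma}(x_j) = \exp\left(\sum_{k\geq 2}\frac{(-1)^{k+1}\zeta(k)}{k}s_k\right)$, where $s_k = \sum_j x_j^k$ is the $k$-th power sum in the Chern roots. The key point I would establish is that the even power sums $s_{2\ell}$ are polynomials in the Pontryagin classes of $TM$: rationally the $p_k$ are the elementary symmetric functions in $\{x_j^2\}$, so $s_{2\ell} = \sum_j (x_j^2)^\ell$ lies in $\QQ[p_1,p_2,\ldots]$ by Newton's identities, whereas the odd power sums $s_{2\ell+1}$ involve odd Chern classes and are not so expressible. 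Since Pontryagin numbers are smooth invariants, any $\hat{\Gamma}_n$ assembled from even power sums alone gives a smooth invariant. In real dimension $4$ and $8$ the relevant component ($n=2,4$) runs over partitions of $n$ into parts $\geq 2$, namely $2$, and $4$ together with $2+2$, all of whose parts are even. I would then verify concretely, reading off Table \ref{tab:Gammaseq} and using $p_1 = c_1^2 - 2c_2$ and $p_2 = c_2^2 - 2c_1c_3 + 2c_4$, that $\hat{\Gamma}_2 = -\frac{1}{2}\zeta(2)p_1$ and $\hat{\Gamma}_4 = \left(\frac{1}{8}\zeta(2)^2 - \frac{1}{4}\zeta(4)\right)p_1^2 + \frac{1}{2}\zeta(4)p_2$, both polynomials in Pontryagin classes, establishing smooth invariance in these dimensions.

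For the $6$-dimensional claim I would note $\hat{\Gamma}_3 = \frac{1}{3}\zeta(3)s_3$, governed by the odd power sum $s_3 = c_1^3 - 3c_1c_2 + 3c_3$, which is not a Pontryagin polynomial, and then exhibit genuine dependence on the complex structure via Example \ref{egs:LeBrun}: substituting the Chern numbers \eref{eqn:K3Chern} gives $\hat{\Gamma}(M,J_n) = \frac{1}{3}\zeta(3)(0 - 3\cdot 48n + 3\cdot 48) = 48\zeta(3)(1-n)$, which varies with $n$ while the underlying smooth manifold $K3\times S^2$ is fixed. Finally, for the $12$-dimensional product statement I would take $N = M_1\times M_2$ with each $M_i$ a copy of $K3\times S^2$ carrying a structure $J_{n_i}$; multiplicativity gives $\hat{\Gamma}(N) = \hat{\Gamma}(M_1)\hat{\Gamma}(M_2)$, which depends on $(n_1,n_2)$ though the smooth type of $N$ does not, so $\hat{\Gamma}$ cannot be a smooth invariant of such $12$-manifolds.

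The step I expect to be the main obstacle is the dimension-$8$ verification. Unlike dimension $4$, where $c_2 = e$ and $c_1^2 = p_1 + 2e$ are both smooth invariants so that every Chern number is automatically one, in dimension $8$ the five Chern numbers are not all smooth invariants, and one must check that the particular combination appearing in $\hat{\Gamma}_4$ collapses exactly onto the Pontryagin subring. The power-sum description makes this transparent through the single identity $s_4 = p_1^2 - 2p_2$, which is what I would use to certify that no odd (non-invariant) piece survives.
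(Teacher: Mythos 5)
Your proof is correct, and its overall skeleton coincides with the paper's: part (1) is the vanishing of $\hat{\Gamma}_1$ combined with Lemma \ref{lem:genusmult}; the dimension 4 and 8 cases rest on exhibiting $\hat{\Gamma}_2$ and $\hat{\Gamma}_4$ as polynomials in Pontrjagin classes; dimension 6 is handled by LeBrun's family from Example \ref{egs:LeBrun}; and dimension 12 by multiplicativity. The genuine difference is in how you reach the Pontrjagin expressions: the paper reads $\hat{\Gamma}_2=-\frac{1}{2}\zeta(2)p_1$ off Table \ref{tab:Gammaseq} and asserts the simplification of $\hat{\Gamma}_4$, whereas you derive both from the expansion $\prod_j 1/\hat{\Gamma}(x_j)=\exp\bigl(\sum_{k\geq 2}\frac{(-1)^{k+1}\zeta(k)}{k}s_k\bigr)$ together with the fact that even power sums of Chern roots lie in $\QQ[p_1,p_2,\ldots]$. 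This buys two things the paper's proof does not make visible: a structural explanation of why dimensions 4 and 8 work and dimension 6 cannot (the partitions of $2$ and of $4$ into parts at least $2$ have only even parts, while the only such partition of $3$ is the odd singleton), and an a priori certificate, via $s_2=p_1$ and $s_4=p_1^2-2p_2$, that no non-Pontrjagin terms can survive in $\hat{\Gamma}_4$. Your treatment of dimension 6 is also slightly tighter than the paper's: the paper argues that $\hat{\Gamma}_3$ cannot be a smooth invariant ``since it is a polynomial in terms of all three Chern numbers,'' which as stated is not a complete inference (a polynomial in non-invariant Chern numbers can still be invariant, as $\hat{\Gamma}_4$ itself illustrates); your explicit evaluation $\hat{\Gamma}(M,J_n)=48\zeta(3)(1-n)$, which genuinely varies with $n$ on the fixed smooth manifold $K3\times S^2$, closes that gap and makes rigorous what Example \ref{egs:LeBrun} asserts. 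In short: same architecture, but your power-sum route is more self-contained and explanatory, at the cost of a little more algebra than simply citing the table.
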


\begin{proof}
For (1), this follows from the vanishing of $\hat{\Gamma}_1(c_1)$ and Lemma \ref{lem:genusmult}.

For (2), we first consider the case where $M$ is a 4-manifold. In this case, we note that $c_1^2-2c_2$ is just the first Pontrjagin class, so that the $\hat{\Gamma}$-genus is a multiple of the first Pontrjagin number, which is a topological invariant of a smooth 4-manifold. 

Next, if $M$ is an 8-manifold, we observe that $\hat{\Gamma}_4(c_1,c_2,c_3,c_4)$ simplifies to
\[\hat{\Gamma}_4(c_1,c_2,c_3,c_4) = (\frac{1}{8}(\zeta(2))^2 - \frac{1}{4}\zeta(4))p_1^2 + \frac{1}{2}\zeta(4)p_2.\] Thus, the $\hat{\Gamma}$-genus is again a linear combination of Pontrjagin numbers, and therefore a smooth invariant, for an 8-manifold. 

If $M$ is a smooth 6-manifold, however, Example \ref{egs:LeBrun} shows that none of the Chern numbers, except for $c_3$, can be a smooth invariant of $M$. Hence, $\hat{\Gamma}_3(c_1,c_2,c_3)$ cannot be a smooth invariant, since it is a polynomial in terms of all three Chern numbers. Since the $\hat{\Gamma}$-genus is multiplicative, it cannot therefore be a smooth invariant for a 12-manifold that is a product of two 6-manifolds.
\end{proof}

\section{Discussion}\label{sec:Discussion}

In this note, we saw that we can extend the Atiyah--Witten regularization of the $S^1$-equivariant Euler characteristic to the case of complex bundles other than $TM\otimes \CC$. As a result, the $\hat{\Gamma}$-genus, which appears to be a new multiplicative genus, was derived.

Surprisingly, the $\hat{\Gamma}$-genus and the $\hat{\Gamma}$-function turn out to have many connections to other fields of mathematics and physics. One such connection comes via a related genus, the $\Gamma$-genus of Libgober \cite{Libgober99}, whose work generalized that of Hosono et al. \cite{HosonoKTY95a} in the study of the mirror symmetry of Calabi-Yau hypersurfaces. In comparing the generating functions of these two genera under Hoffman's formalism, we have seen that the $\hat{\Gamma}$-genus is simply the $\Gamma$-genus with the Euler constant term removed. After this note was submitted, Katzarkov, Kontsevich and Pantev \cite{Katzarkov08} have introduced the $\hat{\Gamma}$-class in their new work on \textbf{nc}-Hodge theory in mirror symmetry. We note that this is different from the $\hat{\Gamma}$-genus introduced here and is, in fact, essentially Libgober's $\Gamma$-genus.

There is a further connection to the study of MZVs besides Hoffman's formalism, however. It turns out, surprisingly, that the $\hat{\Gamma}$-function appears in the context of a regularization formula that is used to recover ``missing'' relations between MZVs (see the works of Cartier \cite{Cartier02} and Ihara, Kaneko and Zagier \cite{IharaKanekoZagier06}.) This parallels its appearance here --- as a result of our proposed regularization procedure --- in the guise of the $\hat{\Gamma}$-genus. 

All of these may perhaps shed a little light on some speculative remarks of Kontsevich \cite[\S 4.6]{Kontsevich99} and Morava \cite{Morava07}. Kontsevich has argued that the functions $\hat{A}(z)$ and $\hat{\Gamma}(z)$ lie in the same orbit of the action of the Grothendieck--Teichm\"uller group on deformation quantizations, while Morava has proposed a context in algebraic topology for the appearance of the $\Gamma$-genus. The results in this note may point towards more evidence for these conjectural remarks.

\ack It is a pleasure for the author to thank his supervisors: Prof. V. Mathai, for suggesting the initial problem and drawing the author's attention to some useful references, and for his patient guidance and support; and Dr. N. Buchdahl, for encouraging a further investigation of the $\hat{\Gamma}$-genus and for his helpful advice and support.

An early version of this work, which forms part of the author's thesis, was presented at the 2007 ICE-EM Graduate School, held at the University of Queensland in July 2007, where the author benefited from a useful discussion with Prof. N. Wallach. Thanks are also due to the organizers of the 51st Annual Meeting of the AustMS, held at La Trobe University, and the Workshop on Geometry and Integrability, held at the University of Melbourne, for the opportunities to present this work.

The author is grateful to the referees for their comments and feedback, and for drawing his attention to the preprint \cite{Katzarkov08}. Finally, thanks are due to R. Green and D. M. Roberts for useful comments and conversations.

\section*{References}
\bibliographystyle{iopart}
\bibliography{workbib}

\end{document}